

\documentclass[11pt]{article}

\usepackage{amsmath}
\usepackage{amsthm}
\usepackage{amssymb}
\usepackage{amsfonts}
\usepackage{arcs}
\usepackage{float}
\usepackage{graphicx}
\usepackage[small]{caption}
\usepackage{color}
\usepackage{enumitem}
\usepackage{marvosym}
\usepackage{cases}
\usepackage{url}
\usepackage{hyperref}
\usepackage{pifont}
\usepackage{graphicx}

\setlength{\oddsidemargin}{0in}
\setlength{\evensidemargin}{0in}
\setlength{\topmargin}{0in}
\setlength{\headheight}{0in}
\setlength{\headsep}{0in}
\setlength{\textwidth}{6.5in}
\setlength{\textheight}{9in}

\newtheorem{theorem}{Theorem}
\newtheorem{lemma}{Lemma}
\newtheorem{corollary}{Corollary}

\makeatletter
\DeclareFontFamily{U}{tipa}{}
\DeclareFontShape{U}{tipa}{m}{n}{<->tipa10}{}
\newcommand{\arc@char}{{\usefont{U}{tipa}{m}{n}\symbol{62}}}%

\newcommand{\arc}[1]{\mathpalette\arc@arc{#1}}

\newcommand{\arc@arc}[2]{%
	\sbox0{$\m@th#1#2$}%
	\vbox{
		\hbox{\resizebox{\wd0}{\height}{\arc@char}}
		\nointerlineskip
		\box0
	}%
}
\makeatother

\newcommand{\NN}{\mathbb{N}} 
\newcommand{\ZZ}{\mathbb{Z}} 
\newcommand{\RR}{\mathbb{R}} 

\newcommand{\eps}{\varepsilon}

\newcommand{\etal}{{et~al.}}
\newcommand{\ie}{{i.e.}}
\newcommand{\eg}{{e.g.}}

\def\T{\mathcal T}

\def\diam{\texttt{diam}}

\newcommand{\later}[1]{{}}
\newcommand{\old}[1]{{}}
\long\def\ignore#1{}

\begin{document}
	
\title{ {\sc Sparse Hop Spanners for Unit Disk Graphs}\footnote{A
  preliminary version of this paper appears
  in the Proceedings of the 31st International Symposium on Algorithms and Computation (ISAAC 2020),
  LIPIcs, vol. 181, Schloss Dagstuhl - Leibniz-Zentrum f{\"{u}}r
  Informatik, 2020, pp. 57:1--57:17.}}
	
\author{%
  Adrian Dumitrescu\footnote{\url{adriandumitrescu.org} Email: \texttt{ad.dumitrescu@gmail.com}.}
\and
  Anirban Ghosh\footnote{School of Computing, University of North Florida, Jacksonville, FL, USA.
	Email: \texttt{anirban.ghosh@unf.edu}.
	Research on this paper was partially supported by the University of North Florida
        Academic Technology Grant and by the NSF award CCF-1947887.}
  \and
  Csaba D. T\'oth\footnote{Department of Mathematics, California State University Northridge,
	Los Angeles, CA; and Department of Computer Science, Tufts University, Medford, MA, USA.
	Email:~\texttt{cdtoth@acm.org}.
        Research on this paper was partially supported by the NSF award DMS-1800734.}
	}
	
\maketitle

\begin{abstract}
A unit disk graph $G$ on a given set $P$ of points in the plane is a
geometric graph where an edge exists between two points $p,q \in P$
if and only if $|pq| \leq 1$.  A spanning subgraph $G'$ of $G$ is a
\emph{$k$-hop spanner} if and only if for every edge $pq\in G$, there is a path
between $p,q$ in $G'$ with at most $k$ edges.
We obtain the following results for unit disk graphs in the plane.

\begin{enumerate} [label=(\roman*)] \itemsep 0pt
\item Every $n$-vertex unit disk graph has a $5$-hop spanner with
  at most $5.5n$ edges. We analyze the family of spanners constructed by
  Biniaz (2020) and improve the upper bound on the number of edges
  from $9n$ to $5.5n$.

\item Using a new construction, we show that every $n$-vertex unit disk graph has
  a $3$-hop spanner with at most $11n$ edges.

\item Every $n$-vertex unit disk graph has a $2$-hop spanner with $O(n\log n)$ edges.
  This is the first nontrivial construction of $2$-hop spanners.

\item For every sufficiently large positive integer $n$, there exists a set $P$ of
  $n$ points on a circle, such that every plane hop spanner on $P$ has hop stretch factor
  at least $4$. Previously, no lower bound greater than $2$ was known.

\item For every finite point set on a circle, there exists a plane (\ie, crossing-free) $4$-hop spanner.
  As such, this provides a tight bound for points on a circle.

\item The maximum degree of $k$-hop spanners cannot be bounded from above by
      a function of~$k$ for any positive integer $k$.
\end{enumerate}

\end{abstract}

\section{Introduction}  \label{sec:intro}

A $k$-spanner (or \emph{$k$-hop spanner}) of a connected graph $G=(V,E)$ is a
subgraph $G'=(V,E')$, where $E' \subseteq E$, with the additional property that the distance
between any two vertices in $G'$ is at most $k$ times the distance
in $G$~\cite{kortsarz1994generating,peleg1989graph}, where the \emph{distance} between two
vertices is the minimum number of edges on a path between them.
The graph $G$ itself is a $1$-hop spanner. The minimum $k$ for which a subgraph
$G'$ is a $k$-spanner of $G$ is referred to as the \emph{hop stretch factor}
(or \emph{hop number}) of $G'$. An alternative characterization of $k$-spanners
is given in the following lemma.

\begin{lemma}[Peleg and Sch\"affer~\cite{peleg1989graph}] \label{lem:equiv}
The subgraph $G'= (V,E')$ is a $k$-spanner of the graph $G=(V,E)$ if and only if
the distance between $u$ and $v$ in $G'$ is at most $k$ for every edge $uv \in E$.
\end{lemma}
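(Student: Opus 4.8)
The plan is to prove the two implications separately, both of which reduce to elementary properties of the hop distance. Throughout, write $d_H(x,y)$ for the minimum number of edges on a path from $x$ to $y$ in a graph $H$, and recall that $E' \subseteq E$, so $G'$ is a spanning subgraph of $G$; in particular $d_{G'}(u,v) \ge d_G(u,v)$ always holds.

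For the forward direction, suppose $G'$ is a $k$-spanner, meaning $d_{G'}(u,v) \le k \cdot d_G(u,v)$ for every pair of vertices $u,v \in V$. If $uv \in E$ is an edge, then $u$ and $v$ are adjacent in $G$, so $d_G(u,v) = 1$; substituting gives $d_{G'}(u,v) \le k$, which is exactly the claimed condition. This direction is immediate.

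The substance is in the converse. Assume $d_{G'}(u,v) \le k$ for every edge $uv \in E$; I would establish the multiplicative bound for an arbitrary pair $u,v$. Set $\ell = d_G(u,v)$ and fix a shortest path $u = w_0, w_1, \ldots, w_\ell = v$ in $G$. Each consecutive pair $w_{i-1} w_i$ is an edge of $G$, so by hypothesis there is a walk from $w_{i-1}$ to $w_i$ in $G'$ of length at most $k$. Concatenating these $\ell$ walks yields a walk from $u$ to $v$ in $G'$ of length at most $k\ell$, whence $d_{G'}(u,v) \le k\ell = k \cdot d_G(u,v)$. The key step is this concatenation, which is simply subadditivity of the hop distance along the edges of a shortest path of $G$.

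There is no real obstacle here; the only points that warrant care are that the replacement walks must live in $G'$ rather than in $G$ (guaranteed by the hypothesis, since $E' \subseteq E$), and that the argument is applied to a \emph{shortest} path of $G$ so that the factor $\ell = d_G(u,v)$ appears and not a larger quantity. Together the two implications show that the multiplicative formulation and the edge-wise formulation pin down exactly the same family of subgraphs, which is precisely why the edge-wise condition (as used for unit disk graphs in the abstract) is the convenient working definition in the sequel.
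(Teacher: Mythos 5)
Your proof is correct. The paper does not actually prove this lemma---it is quoted from Peleg and Sch\"affer~\cite{PS89} without proof---but your two-implication argument (the trivial forward direction via $d_G(u,v)=1$ for an edge, and the converse via replacing each edge of a shortest $G$-path by a $G'$-walk of length at most $k$ and using subadditivity of hop distance) is exactly the standard proof of this equivalence, and all the points needing care (the replacement walks lie in $G'$, the path being subdivided is a \emph{shortest} path of $G$) are handled.
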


If the subgraph $G'$ has only $O(|V|)$ edges, then $G'$ is called
a \emph{sparse} spanner. In this paper we are concerned with constructing
sparse $k$-spanners (with small $k$) for unit disk graphs in the plane.
Given a set $P$ of $n$ points $p_1,\ldots,p_n$ in the plane,
the unit disk graph (UDG) is a geometric graph $G=G(P)$ on the vertex set
$P$ whose edges connect points that are at most unit distance apart.
A \emph{spanner of a point set} $P$ is a spanner of its UDG.

Recognizing UDGs was shown to be NP-Hard by Breu and Kirkpatrick~\cite{breu1998unit}.
Unit disk graphs are commonly used to model network topology in ad hoc wireless
networks and sensor networks.
They are also used in multi-robot systems for practical purposes such as
planning, routing, power assignment, search-and-rescue, information collection,
and patrolling; refer to~\cite{alzoubi2003geometric, dutta2019multi,
  kanj2008geometric,DBLP:journals/ijcga/LiW04,nieberg2008approximation}
for some applications of UDGs.
For packet routing and other applications,
a bounded-degree plane geometric spanner of the wireless network
is often desired but not always feasible~\cite{bose2005constructing}.
Since a UDG on $n$ points can have a quadratic number of edges, a common desideratum
is finding sparse subgraphs that approximate the respective UDG with respect to
various criteria.
\emph{Plane} spanners, in which no two edges cross, are desirable
for applications where edge crossings may cause interference.

Obviously, for every $k \geq 1$, every graph $G=(V,E)$ on $n$ vertices has a
$k$-spanner with $|E|=O(n^2)$ edges. If $G$ is the complete graph,
a star rooted at any vertex is a $2$-hop spanner with $n-1$ edges.
However, the $O(n^2)$ bound on the size of a $2$-hop spanner cannot be
improved; a classic example~\cite{kortsarz1994generating} is that of a
complete bipartite graph with $n/2$ vertices on each side.
In general, if $G$ has girth $k+2$ or higher, then its only $k$-spanner is $G$ itself.
According to Erd\H{o}s' girth conjecture~\cite{Erdos64extremalproblems}, the maximum size of a graph
with $n$ vertices and girth $k+2$ is $\Theta(n^{1+1/\lceil k/2\rceil})$ for $k\geq 2$.
The conjecture has been confirmed for some small values of $k$, but remains open for $k>9$.
For any graph $G$ with $n$ vertices, a $k$-spanner with $O(n^{1+1/\lceil k/2\rceil})$ edges
can be constructed in linear time~\cite{baswana2007simple,Baswana2016}.
We show that for unit disk graphs, we can do much better in terms
of the number of edges for every $k \geq 2$.

Spanners in general and unit disk graph spanners in particular are used
to reduce the size of a network and the amount of routing information.
They are also used for maintaining network connectivity, improving throughput,
and optimizing network lifetime~\cite{biniaz2020plane,gao2005geometric,kanj2008geometric,
  li2003algorithmic,rajaraman2002topology}.

Spanners for UDGs with hop stretch factors bounded by a constant
were introduced by Catusse, Chepoi, and Vax{\`{e}}s in~\cite{catusse2010planar}.
They constructed (i)~$5$-hop spanners with at most $10n$ edges for $n$-vertex UDGs;
and (ii)~plane $449$-hop spanners with less than $3n$ edges.
Recently, Biniaz~\cite{biniaz2020plane} improved both these results,
and showed that for every $n$-vertex unit disk graph, there exists (i)~a $5$-hop
spanner with at most $9n$ edges, and (ii)~a plane $341$-hop spanner.
The algorithms presented in~\cite{biniaz2020plane,catusse2010planar}
run in time that is polynomial in $n$. A summary of these results and our new results
is included in Table~\ref{tab:summary}.

\begin{table}[ht]
\centering
\begin{tabular}{|c|c|c| c|}
\hline
Reference & $k$ & $|E'|$ & Guaranteed to be plane?\\
\hline
\hline
{ Catusse, Chepoi, and Vax{\`{e}}s (2010)}~\cite{catusse2010planar} &
$5$ & $\leq 10n$ & {\large \ding{55}}  \\
\hline
{ Catusse, Chepoi, and Vax{\`{e}}s (2010)}~\cite{catusse2010planar} &
$449$ & $\leq 3n$ & {\large \ding{52}}  \\
\hline
{ Biniaz (2020)}~\cite{biniaz2020plane}	& $5$ & $\leq 9n$ & {\large \ding{55}}  \\	
\hline
{ Biniaz (2020)}~\cite{biniaz2020plane}	& $341$ & $\leq 3n$ & {\large \ding{52}}  \\	
\hline
{\emph{This paper}}  & $5$ & $\leq 5.5n$ & {\large \ding{55}}  \\		
\hline
{\emph{This paper}}  & $3$ & $\leq 11n$ & {\large \ding{55}}  \\		
\hline
{\emph{This paper}}  & $2$ & $O(n\log n)$ & {\large \ding{55}}  \\		
\hline
\end{tabular}
\caption{A summary of results on constructions of hop spanners for unit disk graphs in the plane.}
\label{tab:summary}
\end{table}

\paragraph{Our results.}
The following are shown for unit disk graphs.
\begin{enumerate} [label=(\roman*)] \itemsep 0pt

\item Every $n$-vertex unit disk graph has a $5$-hop spanner with at most $5.5n$ edges
  (Theorem~\ref{thm:biniaz} in Section~\ref{sec:non-plane}). We carefully analyze the
  construction proposed by Biniaz~\cite{biniaz2020plane}
    and improve the upper bound on the number of edges from the $9n$ to $5.5n$.

\item Using a new construction, we show that every $n$-vertex
    unit disk graph has a $3$-hop spanner with at most $11n$ edges
    (Theorem~\ref{thm:3hop} in Section~\ref{sec:non-plane}).
    Previously, no $3$-hop spanner construction algorithm was known.

\item Every $n$-vertex unit disk graph has a $2$-hop spanner with $O(n\log n)$ edges.
  This is the first construction with a subquadratic number of edges
  (Theorem~\ref{thm:2hop+} in Section~\ref{sec:2hop}) and our main result.

\item For every $n\geq 8$, there exists an $n$-element point set $P$ such
      that every plane hop spanner on $P$ has hop stretch factor at least $3$.
      If $n$ is sufficiently large, the lower bound can be raised to~$4$
      (Theorems~\ref{thm:8points} and~\ref{thm:many-points} in Section~\ref{sec:lb}).
      A trivial lower bound of $2$ can be easily obtained by placing
      four points at the four corners of a square of side-length $1/2$.

\item For every finite point set $P$ on a circle $C$, there exists a plane $4$-hop spanner
 (Theorem~\ref{thm:circle} in Section~\ref{sec:lb}). The lower bound of $4$
  holds for some point-set on a circle.

\item For every pair of integers $k \geq 2$ and $\Delta \geq 2$, there exists a set $P$ of
  $n=O(\Delta^k)$ points in the plane such that the unit disk graph
  $G=(P,E)$ on $P$ has no $k$-spanner
  whose maximum degree is at most~$\Delta$ (Theorem~\ref{thm:K_n} in Section~\ref{sec:constant}).
  An extension to dense graphs is given by Theorem~\ref{thm:dense} in Section~\ref{sec:constant}.
  In contrast, Kanj and Perkovi\'c~\cite{kanj2008geometric} showed that UDGs admit bounded-degree
  \emph{geometric} spanners.
\end{enumerate}

\paragraph{Related work.}
Peleg and Sch\"affer~\cite{peleg1989graph} have shown that for a given graph $G$
(not necessarily a UDG)
and a positive integer $m$, it is NP-complete to decide whether there exists a $2$-spanner of $G$
with at most $m$ edges. They also showed that for every graph on $n$ vertices, a $(4k+1)$-spanner
with $O(n^{1+1/k})$ edges can be constructed in polynomial time. In particular,
every graph on $n$ vertices has a $O(\log{n})$-spanner with $O(n)$ edges.
Their result was improved by Alth\"ofer~\etal~\cite{althofer1993sparse}, who showed
that a $(2k-1)$-spanner with $O(n^{1+1/k})$ edges can be constructed in polynomial time;
the run-time was later improved to linear~\cite{baswana2007simple,bose2013plane}.
Kortsarz and Peleg obtained approximation algorithms for the problem of finding,
in a given graph, a $2$-spanner of minimum size~\cite{kortsarz1994generating} or minimum maximum
degree~\cite{kortsarz1998generating}.

In the geometric setting, where the vertices are embedded in a metric space,
spanners have been studied in~\cite{aronov2008sparse,calinescu2006bounded,
  das1989triangulations,dobkin1990delaunay,levcopoulos1992there,DBLP:journals/ijcga/LiW04}
and many other papers. In particular, plane geometric spanners were studied
in~\cite{bose2005constructing,bose2013plane,dumitrescu2016lattice,dumitrescu2016lower}.
The reader is also referred to the
surveys~\cite{bose2013plane,DBLP:books/el/00/Eppstein00,mitchell2000geometric}
and the monograph~\cite{narasimhan2007geometric} dedicated to this subject.

\paragraph{Notation and terminology.}
For two points $p,q \in \RR^2$, we denote the Euclidean distance by $d(p,q)$
or sometimes by $|pq|$. The distance between two sets, $A,B \subset \RR^2$, is defined by
$d(A,B) =\inf \{d(a,b) : a \in A, b \in B\}$.
The diameter of a set $A$, denoted $\diam(A)$, is defined by
$\diam(A) =\sup \{d(a,b) : a,b \in A\}$.
For a set $A$, its boundary and interior
are denoted by $\partial A$ and $\text{int}(A)$, respectively.

A geometric graph $G=(P,E)$ is a \emph{geometric $t$-spanner}, for some $t \geq 1$,
if for every pair of vertices $u,v\in P$, the minimum Euclidean length of a path $\pi_G(u,v)$
between $u$ and $v$ in $G$ is at most $t$ times $|uv|$, \ie, $\forall
u,v \in V, |\pi_G(u,v)| \leq t |uv|$.
When there is no necessity to specify $t$, we simply use the term \emph{geometric spanner}.

Given a graph $G=(V,E)$ and a vertex $u\in V$, the \emph{neighborhood} $N(u)$
is the set of vertices adjacent to $u$. For brevity, a \emph{hop spanner for} a point set
$P\subset \RR^2$ is a hop spanner for the UDG on $P$.
Assume we are given a subgraph $G'=(P,E')$ of the UDG for a point set $P$.
For $p,q \in P$, let $\rho(p,q)$ denote a shortest path in $G'$,
\ie, a path containing the fewest edges; and $h(p,q)$ denote the
corresponding hop distance (number of edges).

A geometric graph is \emph{plane} if any two distinct edges are either disjoint
or only share a common endpoint. Whenever we discuss plane graphs (plane spanners in particular),
we assume that the points (vertices) are in \emph{general position}, \ie, no three
points are collinear.

A \emph{unit disk} (resp., \emph{circle}) is a disk (resp., circle) of unit radius.
The complete bipartite graph with parts of size $m$ and $n$ is denoted by $K_{m,n}$;
in particular, $K_{1,n}$ is a star on $n+1$ vertices.
We use the shorthand notation $[n]$ for the set $\{1,2,\ldots,n\}$.

\section{Sparse (possibly nonplane) hop spanners}  \label{sec:non-plane}

In this section we construct hop spanners with a linear number of edges
that provide various trade-offs between the two parameters of interest:
number of hops and number of edges.

\subsection{Construction of 5-hop spanners} \label{ssec:5hop}

We start with a short outline of the $5$-hop spanner
constructed by Biniaz~\cite[Theorem~3]{biniaz2020plane}; it is based on a regular hexagonal tiling
of the plane with cells of unit diameter. Hence the UDG of a finite point set
$P\subset \RR^2$
contains every edge between points in the same cell. In every nonempty cell, a star
rooted at an arbitrarily chosen point in the cell is created.
Then, for every pair of cells, exactly one edge of the UDG is chosen,
if such an edge exists. Biniaz showed that the resulting graph is
a $5$-hop spanner with at most $9n$ edges.

We next provide a more detailed description and an improved analysis of the
construction. Consider a regular hexagonal tiling $\T$ in the plane with cells of unit diameter;
refer to Fig.~\ref{fig:hex}\,(left). Let $P$ be a finite set of points in the plane.
We may assume that no point in $P$ lies on a cell
boundary. Every point in $P$ lies in the interior of some cell
of $\T$ (and so the distance between any two points inside a cell is
less than~$1$).
Let $p \in P$ be a point in a cell $\sigma$. Denote by
$H_1,\ldots,H_6$ the six cells adjacent to $\sigma$ in
counterclockwise order; these cells form the \emph{first layer} around
$\sigma$. Let $H_7,\ldots,H_{18}$ be
the twelve cells at distance two from $\sigma$ in counterclockwise order, forming
the \emph{second layer} around $\sigma$, such that $H_7$ is adjacent to only
$H_1$ in the first layer.

\begin{figure}[htbp]
\centering
\includegraphics[scale=0.56]{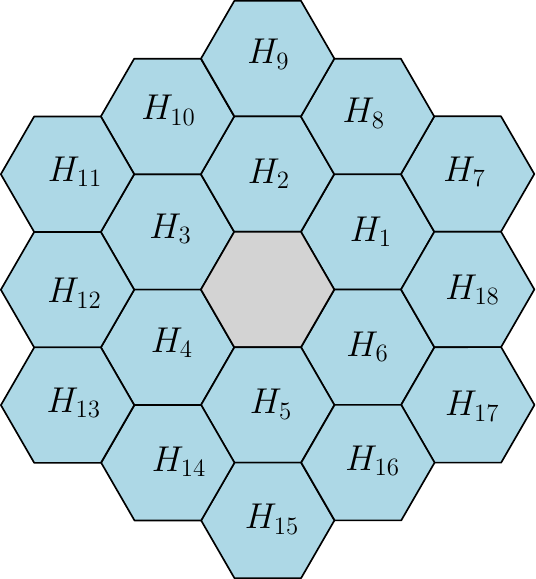}
\hspace{2cm}
\includegraphics[scale=0.56]{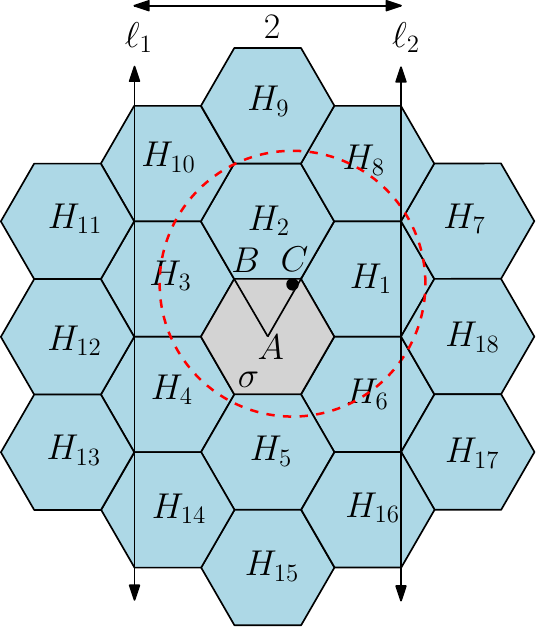}
\caption{Left: A regular hexagonal tiling with cells of unit diameter;
  the figure shows the two layers
  of cells around $\sigma$. Right: The unit disk centered at $p$
  intersects $11$ cells $H_1,\ldots,H_{10},H_{18}$.}
\label{fig:hex}
\end{figure}

For every two distinct cells $\sigma,\tau\in\T$, take an arbitrary edge $pq \in E$,
$p \in \sigma$, $q \in \tau$, if such an edge exists; we call such an edge a \emph{bridge}.
Each cell $\sigma$ can have bridges to at most $18$ other cells, namely those
in the two layers around $\sigma$. A bridge is \emph{short} if it connects points in adjacent cells
and \emph{long} otherwise.

\begin{lemma}\label{lem:leq5}
  Let $p \in P$ be a point that lies in cell $\sigma$.
  The unit disk $D$ centered at $p$ intersects at most five cells from the second layer
  around $\sigma$.
\end{lemma}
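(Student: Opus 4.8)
The plan is to work in coordinates with the center $o$ of $\sigma$ at the origin. Since the cells have unit diameter, each is a regular hexagon of circumradius $\tfrac12$ and apothem $\tfrac{\sqrt3}{4}$, and adjacent cell-centers are at distance $\tfrac{\sqrt3}{2}$. The twelve second-layer cells fall into two types that alternate at $30^\circ$ intervals around $o$: six \emph{corner} cells (each adjacent to two first-layer cells), with centers at distance $\tfrac32$, and six \emph{far} cells (each adjacent to a single first-layer cell), with centers at distance $\sqrt3$. A short computation then gives the two facts I will use. First, because all cells share one orientation, the point of a corner cell nearest $o$ is its innermost vertex, at distance exactly $1$, while the point of a far cell nearest $o$ is its innermost edge-midpoint, at distance $\tfrac{3\sqrt3}{4}>1$; hence every second-layer cell avoids the open unit disk about $o$. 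Second — and this is the fact that drives the bound — any two second-layer cells whose angular positions (directions of their centers from $o$) differ by at least $150^\circ$ are at Euclidean distance at least $2$ from each other, the minimum $2$ being attained only by a pair of diametrically opposite corner cells, whose innermost vertices are $(\pm 1,0)$.

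I would treat $D$ as the \emph{open} unit disk, equivalently counting a cell $H$ as met by $D$ only when $d(p,H)<1$; this is the notion relevant to the construction, since bridge endpoints are points of $P$ lying in cell interiors. With this convention the distance part of the argument is immediate: if two second-layer cells $A$ and $B$ are both met by $D$, pick $a\in A$ and $b\in B$ with $|pa|<1$ and $|pb|<1$, so that $d(A,B)\le|ab|\le|pa|+|pb|<2$. By the second fact, $A$ and $B$ must then be fewer than five positions apart, i.e.\ their angular positions differ by at most $120^\circ$.

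It remains to convert this pairwise bound into the count $5$, and for this I would invoke a purely combinatorial observation about the twelve positions $0^\circ,30^\circ,\dots,330^\circ$: any six of them contain two that are at least $150^\circ$ apart. Indeed, six positions with no antipodal ($180^\circ$) pair must take exactly one representative from each of the six antipodal classes $\{\theta,\theta+180^\circ\}$; avoiding every $150^\circ$ pair between \emph{consecutive} classes then forces the representatives to lie on the ``same side'' throughout, landing on one of the arcs $\{0^\circ,\dots,150^\circ\}$ or $\{180^\circ,\dots,330^\circ\}$ — and each of these already contains a $150^\circ$ pair, a contradiction. Consequently $D$ cannot meet six second-layer cells (six would include a pair at least $150^\circ$ apart, excluded by the previous paragraph), so it meets at most five — exactly the arc $H_{18},H_7,\dots,H_{10}$ realized when $p$ approaches a vertex of $\sigma$ (Fig.~\ref{fig:hex}).

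The main obstacle, and the reason the open-disk convention is not cosmetic, is that the bound is tight in a borderline way. When $p=o$ the \emph{closed} unit disk touches all six corner cells at once, each at its innermost vertex at distance exactly $1$; thus any argument routed through the closed disk, or any purely angular or area-based estimate, yields only $6$. The passage from $6$ to $5$ rests precisely on the strict inequality $|ab|<2$ together with the equality case $d(A,B)=2$ of the second fact. Hence the one step that genuinely requires careful verification is the computation of inter-cell distances — in particular that diametrically opposite corner cells are at distance exactly $2$ while every other pair at least $150^\circ$ apart is strictly farther.
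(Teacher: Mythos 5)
Your proof is correct, and it takes a genuinely different route from the paper's. The paper reduces to a fundamental domain by symmetry (placing $p$ in one of the six equilateral triangles that partition $\sigma$), discards the five cells $H_{13},\ldots,H_{17}$ because their distance to that triangle exceeds $1$, and then discards two of the remaining seven via the single observation $d(H_{7}\cup H_{18},\,H_{11}\cup H_{12})=2$, arriving at $12-5-2=5$. You instead establish a uniform pairwise criterion---two second-layer cells whose directions from the center of $\sigma$ differ by at least $150^{\circ}$ lie at Euclidean distance at least $2$---and combine it with a pigeonhole argument on the twelve cyclic positions showing that any six of them must contain such a pair. Both arguments pivot on the same critical configuration (diametrically opposite corner cells at distance exactly $2$), but yours dispenses with the fundamental-domain reduction, applies verbatim to an arbitrary center $p$ (not only $p\in\sigma$), and makes the tightness of the bound explicit. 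The distance facts you flag for verification do check out: each cell lies in a disk of radius $\tfrac12$ about its center, so $d(A,B)\ge |c_Ac_B|-1$, which equals $2$ for opposite corner cells (centers $3$ apart) and exceeds $2$ for far--far pairs at $180^{\circ}$ (centers $2\sqrt{3}$ apart) and corner--far pairs at $150^{\circ}$ (centers $\sqrt{39}/2$ apart). Finally, your open-disk convention is a genuine gain in precision rather than a reinterpretation of convenience: when $p$ is the center of $\sigma$, the closed unit disk touches all six corner cells at single boundary points, so the paper's inference ``the distance is $2$, hence $D$ meets at most one of the two groups'' silently skips this equality case. What Corollary~\ref{cor:leq11} actually requires is that at most five second-layer cells contain a neighbor of $p$, and since neighbors lie in cell interiors this is exactly the strict condition $d(p,H)<1$ that your argument bounds.
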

\begin{proof}
Let $A$ be the center of $\sigma$ (shaded gray in Fig.~\ref{fig:hex}\,(right)).
Subdivide $\sigma$ into six regular triangles incident to $A$.
By symmetry, we can assume that $p \in \Delta{ABC}$, where $BC=\sigma\cap H_2$.

Note that $d(\Delta{ABC},H_i)>1$ for $i\in \{13,14,15,16,17\}$,
and $D$ is disjoint from the five cells $H_{13}$, $H_{14}$, $H_{15}$, $H_{16}$, and $H_{17}$.
Now, observe that $d(H_{7} \cup H_{18},H_{11} \cup H_{12})=2$.
Hence, $D$ intersects at most one of
$H_{7} \cup H_{18}$ and $H_{11} \cup H_{12}$.
Consequently, $D$ intersects at most $12-5-2=5$ cells
from the second layer around $\sigma$.
\end{proof}

Obviously, any two points in a cell $\sigma$ are at most unit distance apart.
Further, observe that the unit disk $D$ centered at $p$ intersects all six
cells $H_1,\ldots,H_6$.
As such, Lemma~\ref{lem:leq5} immediately yields the following.

\begin{corollary}\label{cor:leq11}
  All neighbors of each point $p \in \sigma$ lies in $\sigma$ and at most $11$ cells around $\sigma$.
\end{corollary}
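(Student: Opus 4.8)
The plan is to derive Corollary~\ref{cor:leq11} as an immediate consequence of Lemma~\ref{lem:leq5} together with two elementary geometric observations that are already noted in the surrounding text. The statement asserts that every neighbor of a point $p \in \sigma$ lies either in $\sigma$ itself or in at most $11$ of the cells surrounding $\sigma$. Since neighbors of $p$ are exactly the points at Euclidean distance at most $1$ from $p$, it suffices to bound the number of cells that the unit disk $D$ centered at $p$ can intersect.

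First I would account for the cells that $D$ is guaranteed to meet. The cell $\sigma$ containing $p$ is one of them, and since each cell has unit diameter, every point in $\sigma$ is within distance $1$ of $p$, so all same-cell points are neighbors; this is the ``lies in $\sigma$'' part of the claim. Next, as the text already observes, the unit disk $D$ centered at $p$ intersects all six first-layer cells $H_1,\ldots,H_6$: this follows because $p$ lies in $\sigma$ and each first-layer cell shares an edge with $\sigma$, so the distance from $p$ to each $H_i$ (for $i \in \{1,\ldots,6\}$) is strictly less than the unit diameter. Thus the six first-layer cells contribute at most $6$ to the count of surrounding cells.

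Then I would invoke Lemma~\ref{lem:leq5} directly for the second layer: it states that $D$ intersects at most five of the twelve second-layer cells $H_7,\ldots,H_{18}$. Adding the two contributions from the surrounding cells gives $6 + 5 = 11$. Since $D$ cannot reach any cell in the third layer or beyond (the distance from $\sigma$ to such cells exceeds $1$, as their distance is at least $2$ minus the diameter of $\sigma$, which is still greater than $1$), these eleven cells together with $\sigma$ account for every cell that $D$ can intersect. Consequently every neighbor of $p$ lies in $\sigma$ or in one of these at most $11$ surrounding cells, which is exactly the claim.

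There is essentially no obstacle here: the entire content of the corollary is packaged into Lemma~\ref{lem:leq5}, and the only additional work is the routine verification that $D$ meets all six first-layer cells and none beyond the second layer, both of which are immediate from the unit-diameter normalization of the hexagonal tiling. The one point that merits a sentence of care is confirming that no cell outside the first two layers can be intersected by $D$; this is handled by noting that any such cell is at distance strictly greater than $1$ from $\sigma$, so no point of it can be within distance $1$ of $p$.
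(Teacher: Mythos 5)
Your proposal is correct and follows essentially the same route as the paper: the corollary is obtained by combining the observation that the unit disk centered at $p$ meets $\sigma$ and all six first-layer cells with Lemma~\ref{lem:leq5}'s bound of five second-layer cells, giving $6+5=11$. The only extra content in your write-up is the explicit (and routine) check that third-layer cells are out of reach, which the paper leaves implicit.
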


\begin{theorem} \label{thm:biniaz}
The (possibly nonplane) $5$-hop spanner constructed by Biniaz~\cite[Theorem~3]{biniaz2020plane}
has at most $5.5n$ edges.
\later{Such a spanner can be computed in $O(n \log{n})$ time.
---comment---Algorithm analyses and running times are missing at the moment.}
\end{theorem}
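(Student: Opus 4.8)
The plan is to partition the edges of the spanner into three groups and bound each separately: the star edges, the short bridges, and the long bridges. Let $m$ denote the number of nonempty cells of $\T$, and for each cell $\sigma$ write $n_\sigma = |P\cap\sigma|$, so that $\sum_\sigma n_\sigma = n$. The star inside a nonempty cell $\sigma$ contributes $n_\sigma - 1$ edges, so the total number of star edges is exactly $\sum_\sigma (n_\sigma - 1) = n - m$. I would also record, via Corollary~\ref{cor:leq11}, that all neighbors of a point lie in its own cell and in the two surrounding layers; hence every bridge connects cells that are at most two layers apart, and a long bridge is precisely a bridge between a cell and one of its second-layer cells.

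Next I would bound the short bridges. Each cell $\sigma$ is adjacent to exactly six first-layer cells, and the construction keeps at most one bridge per pair of cells, so at most six short bridges are incident to $\sigma$. Summing over all cells and dividing by two (each short bridge is incident to two cells) yields at most $3m$ short bridges. It is important here to keep the bound in terms of $m$ rather than the weaker $3n$ one would get from a per-point degree argument, since the $m$-dependence will later collapse via $m \le n$.

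The heart of the argument, and the step I expect to be the main obstacle, is bounding the long bridges, and here I would work with points rather than cells. Consider the graph whose edge set is exactly the collection of long bridges. I claim its maximum degree is at most $5$: if $p \in \sigma$, then every long bridge at $p$ joins $p$ to a point in some second-layer cell $H_i$ of $\sigma$, and such a bridge can exist only if the unit disk $D$ centered at $p$ meets $H_i$; by Lemma~\ref{lem:leq5}, $D$ meets at most five second-layer cells, so $p$ is incident to at most five long bridges. By the handshake lemma, the number of long bridges is therefore at most $5n/2 = 2.5n$. The crucial and easy-to-miss point is that counting incidences at \emph{both} endpoints, rather than charging each bridge to a single endpoint, converts the per-point bound of $5$ into the edge bound $2.5n$ instead of $5n$; this factor-of-two saving is exactly what makes the final constant come out right.

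Finally I would combine the three estimates: the total number of edges is at most $(n - m) + 3m + 2.5n = 3.5n + 2m$. Since every nonempty cell contains at least one point, $m \le n$, and therefore the total is at most $3.5n + 2n = 5.5n$, as claimed. I would emphasize that the bound is driven by the extremal case $m = n$ (all cells singletons, no star edges), in which the short and long bridges alone account for the full $5.5n$.
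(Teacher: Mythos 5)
Your proof is correct, and it reaches the $5.5n$ bound by a genuinely different decomposition than the paper's. The paper argues cell by cell: for a cell with $x$ points it bounds the sum of degrees contributed by the $x-1$ star edges and the at most $18$ bridges, splitting into the cases $x=1$ (degree $\leq 11$ via Corollary~\ref{cor:leq11}), $x=2$ (at most $16$ outer edges since each point reaches at most $5$ second-layer cells), and $x\geq 3$, and then concludes that the average degree is at most $11$. You instead partition the edge set globally by type --- star edges ($n-m$), short bridges ($\leq 3m$ by double counting over cell incidences), and long bridges ($\leq 2.5n$ by the handshake lemma applied to the degree bound of $5$ from Lemma~\ref{lem:leq5}) --- and close with $3.5n+2m\leq 5.5n$ using $m\leq n$. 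Both arguments rest on the same two facts (Lemma~\ref{lem:leq5} and the factor-of-two saving from counting each bridge at both endpoints), and both are tight in the all-singletons case $m=n$, which matches the paper's worst case $x=1$. What your version buys is the elimination of the case analysis on $x$: the three edge classes are each bounded uniformly, and the trade-off between star edges and bridge incidences is absorbed into the single inequality $m\leq n$ rather than checked separately for $x=1,2,\geq 3$. One small point worth making explicit if you write this up: a long bridge incident to $p\in\sigma$ necessarily joins $p$ to a cell in the \emph{second layer around $\sigma$} (the layer relation is symmetric), which is what licenses applying Lemma~\ref{lem:leq5} at either endpoint; you state this, and it is correct.
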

\begin{proof}
Let $P$ be a set of $n$ points and $G=(P,E)$ be the corresponding UDG.
Let $x \geq 1$ be the number of points in a hexagonal cell $\sigma\in\T$.
The construction has $x-1$ inner edges that make a star and at most
$18$ outer edges (bridges) connecting points in $\sigma$ with points in other cells.
We analyze the situation depending on $x$.

If $x=1$, there are no inner edges and at most $11$ outer edges by Corollary~\ref{cor:leq11}.
As such, the degree of the (unique) point in $\sigma$ is at most $11$.

If $x=2$, there is one inner edge and at most $16$ outer edges.
Indeed, by Lemma~\ref{lem:leq5}, each point $p\in P\cap
\sigma$ has neighbors in at most five cells
from the second layer around $\sigma$ (besides points in $P$
in the six cells in the first layer).
Two points in $P\cap \sigma$ can jointly have neighbors in at most $6+5+5=16$ other cells.
As such, the average degree for points in $\sigma$ is at most $(2+16)/2=9$.

If $x \geq 3$, there are $x-1$ inner edges and at most $18$ outer edges.
As such, the average degree for points in $\sigma$ is at most
\[ \frac{2(x-1)+18}{x} = \frac{2x+16}{x} \leq \frac{22}{3}. \]

Summation over all cells implies that the average degree in the resulting
$5$-hop spanner $G'$ is at most $11$, 
thus $G'$ has at most $5.5n$ edges.
\later{
Once the grid is chosen, the points assigned to each cell can be computed
in $O(n)$ time by a bucketing technique.
The remaining steps can be executed in $O(n \log{n})$ time
} 
\end{proof}

\subsection{Construction of 3-hop spanners} \label{ssec:3hop}

Here we show that every point set in the plane has a $3$-hop spanners of linear size.
This brings down the hop-stretch factor of Biniaz's construction from $5$
to $3$ at the expense of increasing in the number of edges (from $5.5n$ to $11n$).

\begin{theorem} \label{thm:3hop}
  Every $n$-vertex unit disk graph has a (possibly nonplane)
  $3$-hop spanner with at most $11n$ edges.
\end{theorem}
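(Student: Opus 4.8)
The plan is to reuse the unit-diameter hexagonal tiling $\T$ and the per-cell stars from Section~\ref{ssec:5hop}, but to replace the single inter-cell bridges by a denser family of connections that let a point step into any neighbouring cell in a single hop. Concretely, in every nonempty cell $\sigma$ I fix a leader $\ell_\sigma\in P\cap\sigma$ and add the star edges $\{\ell_\sigma,p\}$ for all $p\in P\cap\sigma$, which joins any two points of the same cell by a path of at most two edges. Then, for every point $p\in\sigma$ and every cell $\tau\neq\sigma$ with $N(p)\cap\tau\neq\emptyset$, I choose one \emph{representative} $r(p,\tau)\in N(p)\cap\tau$ and add the edge $\{p,r(p,\tau)\}$. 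Since $r(p,\tau)$ is a genuine neighbour of $p$, this edge belongs to the UDG, and it lets $p$ cross into $\tau$ in one hop to a vertex that the star of $\tau$ keeps within two hops of every point of $\tau$.

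To verify the hop bound, let $pq\in E$ with $p\in\sigma$ and $q\in\tau$. If $\sigma=\tau$, the path $p\to\ell_\sigma\to q$ has at most two edges. If $\sigma\neq\tau$, then $q\in N(p)\cap\tau$, so a representative $r=r(p,\tau)\in\tau$ was selected; the edge $\{p,r\}$ is present, and because $r,q\in\tau$ the star of $\tau$ supplies the edges $\{r,\ell_\tau\}$ and $\{\ell_\tau,q\}$. Hence $p\to r\to\ell_\tau\to q$ is a path of at most three edges (shorter when $r=\ell_\tau$, or $q=\ell_\tau$, or $r=q$). Thus $h(p,q)\le 3$ for every edge of the UDG, and by Lemma~\ref{lem:equiv} the graph is a $3$-hop spanner.

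For the size, the star edges number $\sum_\sigma(|P\cap\sigma|-1)<n$, while by Corollary~\ref{cor:leq11} each point $p$ has neighbours in at most $11$ cells other than its own, so $p$ is the source of at most $11$ representative edges; summing over all points bounds the representative edges by $11n$. I expect the main obstacle to lie not in correctness but in tightening this crude total of $12n$ down to the claimed $11n$: shaving the last $n$ requires a careful amortized count in which each non-leader's single star edge is charged against the (unused) representative budget of its endpoints, or a per-cell averaging in the spirit of the proof of Theorem~\ref{thm:biniaz}. The conceptual difficulty, which the construction already resolves, is that one cannot simply join the two leaders $\ell_\sigma,\ell_\tau$ of cells that share a UDG edge, since those leaders may lie at distance greater than $1$; routing through an actual representative neighbour is exactly what keeps each crossing to a single hop and the overall hop count to three.
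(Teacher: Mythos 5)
Your construction is a correct $3$-hop spanner, and it is a genuinely different routing scheme from the paper's: you cross from $\sigma$ into $\tau$ first, via an actual neighbour $r(p,\tau)$, and then take two star hops inside $\tau$, whereas the paper keeps a single \emph{bridge} $pq$ per pair of cells and gives every point of $\sigma$ an intra-cell ``access'' edge to the bridge endpoint $p\in\sigma$, so its $3$-hop path is $p_i\to p\to q\to p_j$ with the first and last hops staying inside $\sigma_i$ and $\sigma_j$. The problem is the edge count. Your analysis stops at (roughly) $12n$ --- at most $11$ representative edges per point by Corollary~\ref{cor:leq11} plus up to $n$ star edges --- and the amortization you gesture at does not close the gap to $11n$. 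A non-leader point $p$ can have neighbours in all $11$ cells its unit disk reaches (e.g., a point near a corner of its cell in a dense configuration), so it issues $11$ representative edges \emph{and} still needs its star edge to $\ell_\sigma$: there is no unused budget at $p$ to absorb that edge. Charging it to $\ell_\sigma$ instead also fails once a cell contains more than a dozen such points, since the leader's $11$ spare slots cannot absorb $x-1$ star edges.

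The paper gets to $11n$ precisely by making the spanning star free: since every point of $\sigma$ is within distance $1$ of any \emph{adjacent} cell, every point of $\sigma$ receives an access edge to the $\sigma$-endpoint of a short bridge, so that endpoint automatically becomes the star center, and the star is already paid for inside the $11$-per-point access budget (when only long bridges exist, Lemma~\ref{lem:leq5} caps the access edges at $5$ per point, leaving ample room for a separate star). In your scheme the star to $\ell_\sigma$ and the $11$ cross-cell edges are disjoint expenditures, so to salvage the $11n$ bound you would have to either adopt the bridge-plus-access-edge structure, or show that a point spending its full budget of $11$ can be wired to the star center at no extra cost --- neither of which follows from what you wrote. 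As it stands, your argument proves the theorem with $12n$ in place of $11n$.
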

\begin{proof}
Let $P$ be a set of $n$ points in the plane, and let
$G=(P,E)$ be the UDG of $P$.
Let $G'$ be the 5-hop spanner described in Section~\ref{ssec:5hop},
based on a hexagonal tiling $\T$ with cells of unit diameter.
We construct a new graph $G''$ that consists of all bridges from $G'$
and, for each nonempty cell $\sigma\in \T$, a spanning star of the points in $\sigma$
defined as follows.

\begin{figure}[htbp]
 \centering
 \includegraphics[scale=0.55]{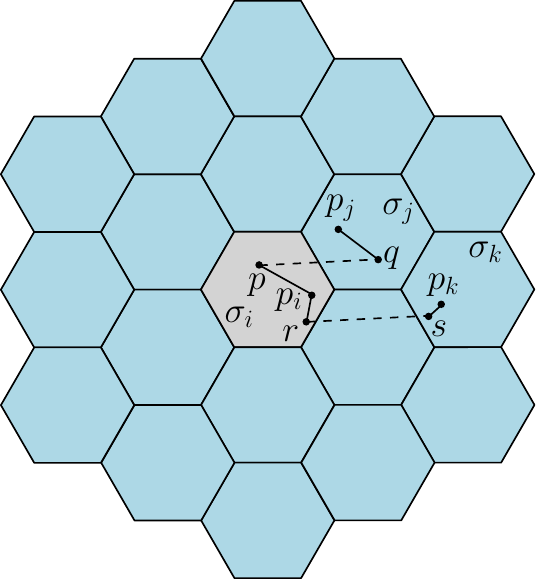}
 \caption{Three points in $P$, $p_i \in \sigma_i$, $p_j \in \sigma_j$, and $p_k \in \sigma_k$
  where $p_i p_j, p_i p_k \in E$. Edge $pq$ is a short bridge connecting $\sigma_i$ and $\sigma_j$
  and edge $rs$ is a long bridge connecting $\sigma_i$ and $\sigma_k$.}
  \label{fig:hex2}
\end{figure}

Let $\sigma\in \T$ be a nonempty cell and let $p_i \in P \cap \sigma$.
For every cell $\tau\in \T$ in the two layers around $\sigma$,
if $d(p_i,\tau)\leq 1$ and $G'$ contains a bridge $pq$,
where $p\in \sigma\setminus \{p_i\}$ and $q\in \tau$,
then we add the edge $p_ip$ to $G''$.
Since $\diam(\sigma)=1$, if $pq$ is a short bridge,
then $p$ is the center of a spanning star on $P \cap \sigma$.
In addition, if no short bridge is incident to any point in $\sigma$,
then we add a spanning star of $P\cap \sigma$
(centered at the endpoint of a long bridge, if any) to $G''$.

It is easy to see that the hop distance between any two points within a cell is at most~$2$.
Indeed, by construction, the points in each nonempty cell are connected by a spanning star.
Consider now a pair of points $p_i \in P \cap \sigma_i$, $p_j
\in P \cap \sigma_j$, $i \neq j$,
where $p_i p_j \in E$. By construction, there is a bridge $pq \in G''$
between the cells $\sigma_i$ and $\sigma_j$. As such, $p_i$ is connected to $p_j$
by a $3$-hop path $p_i,p,q,p_j$. Refer to Fig.~\ref{fig:hex2} for an illustration.

We can bound the average degree of the points in $\sigma$ as follows.
Let $x$ be the number of points in $\sigma$.
By Corollary~\ref{cor:leq11}, the neighbors of each point $p_i\in \sigma$
lie in $\sigma$ and at most 11 cells around $\sigma$.
If $p_i$ is not incident to any bridge, we add at most $11$ edges between $p_i$
and other points in $\sigma$; these edges increase the sum of degrees in $\sigma$ by $2\cdot 11=22$.
Otherwise assume that $p_i$ is incident to $b_i$ bridges, for some $1\leq b_i\leq 11$.
Then we add edges from $p_i$ to at most $11-b_i$ other points in $\sigma$.
The $b_i$ bridges each have only one endpoint in $\sigma$. Overall, these edges
contribute $2(11-b_i)+b_i=22-b_i< 22$ to the sum of degrees in $\sigma$.

If no short bridge has an endpoint in $\sigma$, then by Lemma~\ref{lem:leq5} we add
at most 5 edges between each point $p_i\in \sigma$ and endpoints of long bridges;
these edges increase the sum of degrees in $\sigma$ by $2\cdot 5=10$.
However, we also add a spanning star that contributes $2(x-1)$ to the same sum.
Overall, the sum of degrees in $\sigma$ is bounded from above by
\[
  \begin{cases}
   2\cdot 11x  = 22x,    &  \text{if some short bridge has an endpoint in $\sigma$}\\
    2(x-1)+10x < 12x, & \text{otherwise.}
  \end{cases}
\]
Thus, the average vertex degree is at most $22$ in all $\sigma\in \T$.
Consequently, the $3$-hop spanner $G''$ has at most $11n$ edges.
\later{
The running time remains $O(n)$; similarly to the proof of Theorem~\ref{thm:3hop}.
} 
\end{proof}

\paragraph{Remark.} It is natural to ponder whether the UDG on any $n$ points in the plane
has a subgraph with $O(n)$ edges that is  a $k$-hop spanner (for small $k$) and also
a geometric spanner of $G$.
Such subgraphs of UDGs can find practical uses in the real-world. Interestingly, the answer is yes.
It is shown by Kanj and Perkovi\'c~\cite{kanj2008geometric} that the UDG of a point set $P$
has a subgraph $G_1=(P,E_1)$ with $O(n)$ edges that is a
geometric $t$-spanner for some constant $t$.
Let $G_2=(P,E_2)$ be the $3$-hop spanner generated by the
construction in Theorem~\ref{thm:3hop}.
Clearly, the graph $G':=(P,E_1 \cup E_2)$ is a subgraph of the
UDG, it has $O(n)$ edges, and it is
both a $3$-hop spanner for the UDG of $P$ and a geometric
$t$-spanner for $P$ with a constant $t$.

\section{Construction of 2-hop spanners}   \label{sec:2hop}

In this section, we construct a 2-hop spanner with $O(n\log n)$ edges
for a set $P$ of $n$ points in the plane.
We begin with a construction in a bipartite setting (cf.~Lemma~\ref{lem:bipartite+}),
and then extend it to the general setup.

We briefly review the concept of $\eps$-nets~\cite{mustafa2017epsilon}, which is crucial
for our construction.
Let $(P,\mathcal{R})$ be a set system (a.k.a. \emph{range
  space}), where $P$ is a finite set in an ambient space
and $\mathcal{R}$ is a collection of subsets of that space (called \emph{ranges}).
For $\eps>0$, an \emph{$\eps$-net} for $(P,\mathcal{R})$ is a set $N\subset P$
such that for every $R\in \mathcal{R}$, $|P\cap R|\geq
\eps\cdot |P|$ implies $N\cap R\neq \emptyset$.
When the ambient space is $\mathbb{R}^d$ for some $d\in \mathbb{N}$, and $\mathcal{R}$ is a collection
of semi-algebraic sets, there exists an $\eps$-net of size $O(\frac{d}{\eps}\log \frac{d}{\eps})$,
and this bound is best possible in many cases~\cite{pach2013tight}.
However, for some geometric set systems, $\eps$-nets of size $O(\frac{1}{\eps})$ are possible.
For example, if $P$ is a set of points in the plane and $\mathcal{R}$ consists of halfplanes,
then there exists an $\eps$-net of size $O(\frac{1}{\eps})$~\cite{pach1990some}.
We adapt this results to unit disks in a somewhat stronger form (cf.~Lemma~\ref{lem:epsilon}).

\begin{figure}[htbp]
 \centering
 \includegraphics[width=0.8\textwidth]{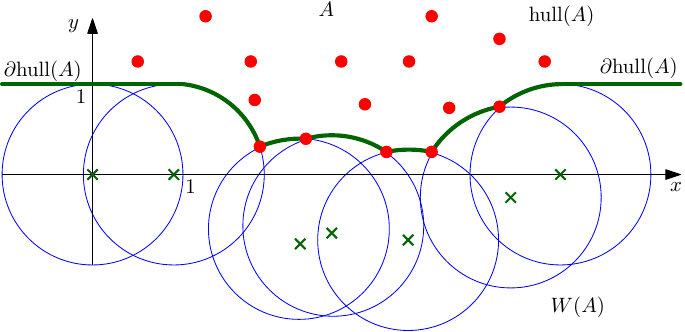}
 \caption{A set $A$ of 16 points above the $x$-axis, $W(A)$, and $\text{hull}(A)$.
   The boundary $\partial \text{hull}(A)$ is an $x$-monotone curve,
   which consists of horizontal segments and arcs of unit circles centered
   on or below the $x$-axis (the centers are marked with crosses).}
  \label{fig:alpha}
\end{figure}

\paragraph{Alpha-shapes.}
As a generalization of convex hulls of a set of points,
Edelsbrunner, Kirkpatrick, and Seidel~\cite{edelsbrunner1983shape} introduced
$\alpha$-shapes, using balls of radius $1/\alpha$ instead of halfplanes.
We introduce a similar concept, in the bipartite
setting, as follows; see Fig.~\ref{fig:alpha} for an illustration.
We consider the set system $(A,\mathcal{D})$, where $A$ is a finite set of points
in the plane above the $x$-axis and $\mathcal{D}$ is the set of all unit disks centered
on or below the $x$-axis.
Let $W(A)$ be the union of all unit disks $D\in \mathcal{D}$ such that
$A\cap \text{int}(D)=\emptyset$;
and let $\text{hull}(A)=\mathbb{R}^2\setminus \text{int}(W(A))$.

The following easy observation shows that disks in $\mathcal{D}$,
restricted to the upper halfplane $\{(x,y)\in \mathbb{R}^2: y>0\}$,
behave similarly to halfplanes in $\mathbb{R}^2$.

\begin{lemma} \label{lem:axiom}
For any two points $p_1,p_2\in \mathbb{R}^2$ above the $x$-axis, there
is at most one unit circle centered at a point on or below the
$x$-axis that is incident to both $p_1$ and $p_2$. Consequently, for
any two unit disks $D_1,D_2\in \mathcal{D}$, at most one point in
$\partial D_1\cap \partial D_2$ lies above the $x$-axis.
\end{lemma}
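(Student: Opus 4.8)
The plan is to reduce the first statement to a one-dimensional convexity computation on the perpendicular bisector of $p_1p_2$, and then to exploit a reflection symmetry about the midpoint; the second statement will follow formally.

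First I would observe that any center $c$ of a unit circle through both $p_1$ and $p_2$ satisfies $|cp_1|=|cp_2|=1$, so $c$ lies on the perpendicular bisector $\ell$ of the segment $p_1p_2$. Parametrizing $\ell$ by signed arclength $t$ measured from the midpoint $m=\tfrac12(p_1+p_2)$, and writing $\hat u$ for a unit direction vector of $\ell$, one has $|c(t)\,p_1|^2=|mp_1|^2+t^2$, a strictly convex quadratic in $t$. Hence the equation $|c(t)\,p_1|=1$ has at most two solutions, and when two exist they are $t=\pm t_0$ with $t_0=\sqrt{1-|mp_1|^2}$. In other words, there are at most two unit circles through $p_1$ and $p_2$, and their centers $c_+=m+t_0\hat u$ and $c_-=m-t_0\hat u$ are reflections of each other across $m$.

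The key step is then a sign argument. Since $p_1$ and $p_2$ both lie above the $x$-axis, so does the midpoint $m$; that is, its $y$-coordinate $m_y$ is positive. The $y$-coordinates of the two candidate centers satisfy $(c_+)_y+(c_-)_y=2m_y>0$, so they cannot both be nonpositive. Therefore at most one of $c_+,c_-$ has $y$-coordinate $\le 0$, which proves that at most one unit circle centered on or below the $x$-axis is incident to both $p_1$ and $p_2$. The only configuration needing a separate glance is the degenerate one where $\ell$ is horizontal (equivalently, $p_1p_2$ is vertical): then $\hat u$ has no vertical component and both centers share the $y$-coordinate $m_y>0$, so no admissible center exists at all and the bound holds a fortiori.

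For the consequence I would argue by contradiction. Two distinct unit disks $D_1,D_2\in\mathcal D$ have distinct centers (both being unit disks of the same radius), so their bounding circles $\partial D_1$ and $\partial D_2$ meet in at most two points. If two such intersection points $p_1,p_2$ were both above the $x$-axis, then $\partial D_1$ and $\partial D_2$ would be two \emph{distinct} unit circles centered on or below the $x$-axis and incident to both $p_1$ and $p_2$, contradicting the first part of the lemma. Hence at most one point of $\partial D_1\cap\partial D_2$ lies above the $x$-axis. I do not expect any serious obstacle here; the whole argument is elementary, and its single genuinely load-bearing observation is that the two candidate centers are reflections across a midpoint that is forced to lie strictly above the $x$-axis.
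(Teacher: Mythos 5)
Your proof is correct and follows essentially the same route as the paper: both place the candidate centers on the perpendicular bisector of $p_1p_2$ and use the fact that they are reflections of each other about the midpoint, which lies above the $x$-axis, to conclude that at most one center can lie on or below it. Your version, via $(c_+)_y+(c_-)_y=2m_y>0$, is in fact slightly more precise than the paper's phrasing (which only invokes the centers lying on opposite sides of the line through $p_1p_2$), and you also spell out the derivation of the second assertion, which the paper leaves implicit.
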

\begin{proof}
Suppose that two unit circles, $c_1$ and $c_2$, are incident to both
$p_1$ and $p_2$. Then the centers of $c_1$ and $c_2$ are on the
orthogonal bisector of segment $p_1p_2$, on opposite sides of the line
through $p_1p_2$. Hence one of the circle centers is above the
$x$-axis. Therefore at most one of the circles is centered at a point
on or below the $x$-axis.
\end{proof}

We continue with a few basic properties of the boundary of $\text{hull}(A)$,
which exhibit the same behavior as convex hulls with respect to lines in the plane.

\begin{lemma} \label{lem:hull}
The set system $(A,\mathcal{D})$ defined above has the following properties:
\begin{enumerate}\itemsep0pt
\item $\partial \text{hull}(A)$ lies above the $x$-axis;
\item every vertical line intersects $\partial \text{hull}(A)$ in one
  point, thus $\partial \text{hull}(A)$ is an $x$-monotone curve;
\item for every unit disk $D\in \mathcal{D}$, the intersection $D\cap (\partial \text{hull}(A))$
  is connected (possibly empty);
\item for every unit disk $D\in \mathcal{D}$, if $A\cap D\neq \emptyset$,
  then $A\cap D$ contains a point in $\partial \text{hull}(A)$.
\end{enumerate}
\end{lemma}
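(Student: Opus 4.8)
The plan is to describe $\text{hull}(A)$ through an explicit \emph{ceiling function} and then to read off the four properties, using the single-crossing property of Lemma~\ref{lem:axiom} as the main geometric engine. For a disk $D\in\mathcal D$ with centre $(c_x,c_y)$, $c_y\le 0$, write $g_D(x)=c_y+\sqrt{1-(x-c_x)^2}$ for its upper arc on $[c_x-1,c_x+1]$; note $g_D\le c_y+1\le 1$. I would first record two structural facts. (a) \emph{Vertical monotonicity of the covered region}: if $D$ is empty (that is, $A\cap\text{int}(D)=\emptyset$) and $q\in\text{int}(D)$, then translating $D$ straight down keeps it empty, since moving the centre to $(c_x,c_y-\delta)$ only increases $|a-\text{centre}|$ for every $a\in A$ (each $a$ lies strictly above the centre, as $a_y>0\ge c_y$), and the translate's interior contains every point directly below $q$. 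Hence $\text{int}(W(A))$ meets each vertical line in a downward ray. (b) \emph{Closedness}: because $A$ is finite, the admissible centres form a closed set and the centres witnessing a boundary point of $W(A)$ stay bounded, so $W(A)$ is closed. I then set $f(x)=\max\{y:(x,y)\in W(A)\}$, which by (a) and (b) is attained, satisfies $\text{int}(W(A))\cap\{x=x_0\}=(-\infty,f(x_0))$, and is the supremum of the upper arcs $g_E$ over empty disks $E$.

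For part (1) I would show every point on or below the $x$-axis is covered. Given $p=(x_0,0)$, let $h>0$ be the minimum height of the (finitely many) points of $A$; the disk centred at $(x_0,-1+h/2)$ has its topmost point at height $h/2<h$, so it contains no point of $A$ and is empty, while its interior contains $p$ (the centre is at distance $1-h/2<1$ from $p$). Thus $\partial\text{hull}(A)$ avoids the closed lower half-plane and $0<f$. For part (2), fact (a) already gives an open downward ray below each vertical, and $g_D\le 1$ gives $f\le 1$; it remains to check that $f$ is continuous, so that points strictly above the graph are interior to $\text{hull}(A)$ and each vertical meets $\partial\text{hull}(A)$ in the single point $(x_0,f(x_0))$. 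Upper semicontinuity is immediate from closedness of $W(A)$, and lower semicontinuity follows because the empty disk realizing $f(x_0)$ still covers heights tending to $f(x_0)$ at nearby abscissae. Then $\text{int}(W(A))=\{y<f(x)\}$ and $\partial\text{hull}(A)=\{y=f(x)\}$ is an $x$-monotone curve.

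For part (4) I would push $D$ downward until it becomes empty. For $t\ge 0$ let $D_t$ be $D$ translated down by $t$ (still in $\mathcal D$); a point $a$ lies in $D_t$ precisely for $t$ up to its \emph{exit time} $c_y-a_y+\sqrt{1-(a_x-c_x)^2}$. Let $a^\star\in A\cap D$ maximize this exit time, with maximum $t^\star\ge 0$ (nonnegative as $a^\star\in D=D_0$). Then $D_{t^\star}$ is empty, since any $a'\in A\cap\text{int}(D_{t^\star})$ would have a strictly larger exit time yet still lie in $D_0$, contradicting maximality; and $a^\star$ lies on the \emph{upper} arc of $D_{t^\star}$ (a point leaves a downward-moving disk through the top). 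Hence $f(a^\star_x)\ge a^\star_y$, while $a^\star\in A$ gives $a^\star_y\ge f(a^\star_x)$, so $a^\star_y=f(a^\star_x)$ and $a^\star\in A\cap D$ lies on $\partial\text{hull}(A)$.

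Part (3) is the crux and the step I expect to fight hardest. Since the graph of $f$ is $x$-monotone, $D\cap\partial\text{hull}(A)$ is connected if and only if $I:=\{x\in[c_x-1,c_x+1]:f(x)\le g_D(x)\}$ is an interval (the lower-arc constraint is free, as the lower arc stays $\le c_y\le 0<f$). Suppose not, and pick $x_1<x_2<x_3$ with $x_1,x_3\in I$ but $x_2\notin I$. Membership of $x_1,x_3$ in $I$ forces $g_D(x_1),g_D(x_3)>0$, and since $g_D$ is unimodal this yields $g_D>0$ on all of $[x_1,x_3]$. Choosing an empty disk $E\neq D$ realizing $f(x_2)$ gives $g_E(x_2)=f(x_2)>g_D(x_2)$, while $g_E\le f\le g_D$ at $x_1$ and $x_3$. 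Comparing $g_E$ with $g_D$ on $[x_1,x_2]$ and on $[x_2,x_3]$ then forces, in each subinterval, a crossing at which $g_E=g_D>0$, i.e.\ an intersection point of $\partial D$ and $\partial E$ strictly above the $x$-axis; these two distinct points contradict Lemma~\ref{lem:axiom}. The delicate bookkeeping is exactly verifying that both crossings sit above the axis even when $E$'s domain does not reach $x_1$ or $x_3$, and this is precisely where the fact $g_D>0$ on $[x_1,x_3]$ is used.
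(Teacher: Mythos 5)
Your proof is correct and follows essentially the same route as the paper's: part (1) via a low empty disk, part (2) via downward monotonicity of $W(A)$ along vertical lines (the paper phrases this as the downward ray from any boundary point lying in $W(A)$), part (4) by sliding $D$ straight down until its interior clears $A$, and part (3) by producing two intersection points of $\partial D$ with the boundary of an empty disk above the $x$-axis, contradicting Lemma~\ref{lem:axiom}. Your explicit ceiling-function $f$ and the intermediate-value argument in part (3) are just a more analytic packaging of the paper's topological crossing argument, and the endpoint bookkeeping you flag does close in the way you indicate (at the edge of $E$'s domain one has $g_E=e_y\le 0<g_D$, so the forced crossing still lies above the axis).
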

\begin{proof}
Let $h$ be the minimum of the $y$-coordinates of the points in $A$.
If $h\geq 1$, then $W(A)=\{(x,y): y\leq 1\}$ is a halfplane bounded by the line
$y=1$, so the lemma trivially holds. In the remainder of the proof,
assume that $0<h<1$.

\noindent\textbf{(1)}
Since $0<h<1$, the halfplane below the horizontal line $y=h$ lies in
the interior of $W(A)$ (as every point below this line is in the
interior of a unit disk whose center is below the $x$-axis and whose
interior is disjoint from $A$). Property~1 follows.

\noindent\textbf{(2)}
Let $p\in \partial \text{hull}(A)$. Then $p$ lies on the boundary of a
unit disk $D_p$ whose center is below the $x$-axis (and whose interior
is disjoint from $A$). In particular $D_p\subset W(A)$. The vertical
line segment from $p$ to the $x$-axis lies in $D_p$, hence in $W(A)$.
Consequently, $W(A)$ contains the vertical downward ray emanating from $p$. Property~2 follows.

\noindent\textbf{(3)}
Let $D\in \mathcal{D}$. Suppose, to the contrary, that the
intersection $D\cap (\partial \text{hull}(A))$ has two or more
components. By property~1, the $x$-coordinates of the components are
disjoint intervals, and the components have a natural left-to-right
ordering. Let $p_1$ be the rightmost point in the first component, and
let $p_2$ be the leftmost point in the second component. Clearly
$p_1,p_2\in \partial D$. Let $q$ be an arbitrary point in $\partial
\text{hull}(A)$ between $p_1$ and $p_2$. Then $q$ lies on the boundary
of a unit disk $D_q$ whose center is below the $x$-axis (and whose
interior is disjoint from $A$). Since $D_q\subset W(A)$, neither $p_1$
nor $p_2$ is in the interior of $D_q$. Since the center of $D_q$ is
below the $x$-axis, $\partial D_q$ contains two interior-disjoint
circular arcs between $q$ and the $x$-axis; and both arcs must cross
$\partial D$. We have found two intersection points in $\partial D\cap
\partial D_q$ above the $x$-axis, contradicting
Lemma~\ref{lem:axiom}. This completes the proof of Property~3.

\noindent\textbf{(4)}
Let $D\in \mathcal{D}$ such that $A\cap D\neq \emptyset$. By
continuously translating $D$ vertically down until its interior is
disjoint from $A$, we obtain a unit disk $D'$ such that $A\cap
\text{int}(D')=\emptyset$ but $A\cap \partial D' \neq
\emptyset$. Since the center of $D'$ is vertically below the center of
$D$, we have $A\cap \partial D'\subset A\cap D$ and $D'\subset
W(A)$. This implies that $A\cap \partial D'\subset \partial
\text{hull}(A)$, as required.
\end{proof}

\begin{lemma}\label{lem:epsilon}
Consider the set system $(A,\mathcal{D})$ defined above.
For every $\eps\in (0,\frac23)$, we can construct an $\eps$-net $N=\{v_1,\ldots , v_k\}\subset A$,
labeled by increasing $x$-coordinates, such that
\begin{enumerate}\itemsep0pt
\item $|N|\leq \lfloor 2/\eps\rfloor$;
\item $N\subset \partial \text{hull}(A)$;
\item for every $D\in \mathcal{D}$, the points in $D\cap N$ are consecutive in $N$; and
\item for every $D\in \mathcal{D}$, $|N\cap D|\geq 5$ implies $|A\cap D|\geq 2\eps |A|$.
\end{enumerate}
\end{lemma}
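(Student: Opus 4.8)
The plan is to reduce the whole problem to the one-dimensional combinatorics of the boundary curve, using a single geometric lemma as the engine. By Lemma~\ref{lem:hull}, $\partial\text{hull}(A)$ is an $x$-monotone curve made of unit-circle arcs, and the points of $A$ lie on or above it; write $u_1,\ldots,u_m$ for the points of $A\cap\partial\text{hull}(A)$ sorted by $x$-coordinate, and for an arbitrary $p\in A$ let $u_{\ell(p)},u_{\ell(p)+1}$ be the two consecutive such vertices with $x(u_{\ell(p)})\le x(p)\le x(u_{\ell(p)+1})$. The engine I would prove first is the \emph{flanking property}: for every $D\in\mathcal{D}$, if $p\in D$ then $u_{\ell(p)}\in D$ or $u_{\ell(p)+1}\in D$. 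Writing each disk of $\mathcal{D}$ as the region below its upper arc $f$ (since $A$ lies above the $x$-axis and each disk is centered on or below it, the lower cap is irrelevant), membership $p\in D$ means $f(x(p))\ge y(p)\ge \beta(x(p))$, where $\beta$ denotes the height of $\partial\text{hull}(A)$. On the single arc of $\partial\text{hull}(A)$ between $u_{\ell(p)}$ and $u_{\ell(p)+1}$, both $f$ and $\beta$ are upper arcs of unit circles centered on or below the $x$-axis, so by Lemma~\ref{lem:axiom} they cross at most once above the $x$-axis. Hence $f-\beta$ changes sign at most once on $[x(u_{\ell(p)}),x(u_{\ell(p)+1})]$, and since it is nonnegative at the interior point $x(p)$ it stays nonnegative all the way to at least one endpoint, which says exactly that $u_{\ell(p)}\in D$ or $u_{\ell(p)+1}\in D$.

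With the flanking property in hand I would construct $N$ by a left-to-right greedy sweep of $u_1,\ldots,u_m$. From the current net vertex $u_c$, let $u_d$ be the rightmost vertex with $|\{q\in A: x(u_c)\le x(q)\le x(u_d)\}|<\eps|A|$ and take $u_d$ next; if even the adjacent vertex $u_{c+1}$ already violates this bound (a single heavy gap), take $u_{c+1}$ instead. Thus every \emph{slab} between consecutive net vertices either carries fewer than $\eps|A|$ points of $A$ or is \emph{atomic} (its endpoints are adjacent boundary vertices). Properties~2 and~3 are then immediate: $N\subseteq\partial\text{hull}(A)$ by construction, and since the boundary vertices inside any $D$ are consecutive (third property of Lemma~\ref{lem:hull} together with $x$-monotonicity), their subsequence $N\cap D$ is consecutive in $N$. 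For Property~1 I would amortize: by maximality of the greedy, each light slab either already carries at least $\eps|A|/2$ points or is immediately followed by an atomic slab carrying at least $\eps|A|$ points, so the average mass per slab is at least $\eps|A|/2$, giving $|N|\le |A|/(\eps|A|/2)=2/\eps$ and hence $|N|\le\lfloor 2/\eps\rfloor$.

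For the net property I would argue by contradiction using the flanking property. Suppose $D\in\mathcal{D}$ contains no point of $N$ but $A\cap D\ne\emptyset$; by the fourth property of Lemma~\ref{lem:hull}, $D$ contains a boundary vertex, and the boundary vertices in $D$ form a consecutive block $u_a,\ldots,u_b$ lying strictly between two consecutive net vertices $v_j,v_{j+1}$ (so this slab is non-atomic, hence light). By the flanking property every $q\in A\cap D$ has one of its two bracketing vertices in $\{u_a,\ldots,u_b\}$, whence $x(q)\in[x(u_{a-1}),x(u_{b+1})]\subseteq[x(v_j),x(v_{j+1})]$; thus $A\cap D$ is confined to this single light slab and $|A\cap D|<\eps|A|$, which is exactly the contrapositive of the net property.

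The main obstacle is Property~4, and I would spend most of the effort there. When $|N\cap D|\ge5$, the five consecutive net vertices $v_i,\ldots,v_{i+4}$ lie in $D$, so the whole arc of $\partial\text{hull}(A)$ between $v_i$ and $v_{i+4}$ lies in $D$ (third property of Lemma~\ref{lem:hull}); in particular every boundary vertex with $x$-coordinate in $[x(v_i),x(v_{i+4})]$ belongs to $A\cap D$. The delicate point is turning this into $|A\cap D|\ge 2\eps|A|$: the mass threshold guarantees roughly $\eps|A|/2$ points of $A$ in each of the four spanned slabs, totalling about $2\eps|A|$ in the $x$-range, but some of these may lie strictly above $\partial\text{hull}(A)$, where a `low' disk $D$ fails to capture them. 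Reconciling the two requirements --- spacing net vertices finely enough to stab every heavy disk, yet coarsely enough that five of them certify $2\eps|A|$ captured points --- is the crux. I expect the resolution to exploit the dichotomy that a disk reaching high enough to miss such interior points must, by concavity of its arc $f$ together with the single-crossing property of Lemma~\ref{lem:axiom}, simultaneously capture a long low portion of $\partial\text{hull}(A)$, so that the boundary vertices alone already supply $2\eps|A|$ points of $A\cap D$; making this trade-off precise, and tuning the threshold so that the count comes out to $2\eps|A|$ with the constant $5$, is where the real work lies.
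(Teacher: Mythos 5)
There is a genuine gap, and it is exactly where you locate it: Property~4 is never proved, and in fact your mass-based greedy construction does not satisfy it. In your sweep, net vertices are placed according to how many points of $A$ have $x$-coordinate in a slab, regardless of whether those points are near $\partial \text{hull}(A)$ or far above it. If $A$ has a large cluster of points high above the boundary near some abscissa $x_0$, while only a handful of points of $A$ actually lie on $\partial \text{hull}(A)$ there, every single gap near $x_0$ is ``heavy,'' so your greedy is forced to take many consecutive boundary vertices into $N$. A low disk $D\in\mathcal{D}$ hugging the short boundary arc through these vertices then contains $5$ (or more) points of $N$ but only the few boundary points of $A$ --- far fewer than $2\eps|A|$. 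So no amount of tuning the threshold or the constant $5$ can rescue Property~4 for this construction. The missing idea is the paper's: take $N$ to be a \emph{minimal} sub-$\eps$-net of $M=A\cap\partial\text{hull}(A)$. Minimality furnishes, for each $v_j\in N$, a \emph{witness disk} $D_j\in\mathcal{D}$ with $|A\cap D_j|\geq \eps|A|$ and $D_j\cap N=\{v_j\}$ --- a certificate that $\eps|A|$ points of $A$ near $v_j$ are capturable by a \emph{low} disk. These witnesses do double duty: each point of $A$ lies in at most two of the $D_j$ (their $x$-ranges are confined between $v_{j-1}$ and $v_{j+1}$), which gives $k\leq \lfloor 2/\eps\rfloor$ cleanly; and when $D$ contains $v_i,\dots,v_{i+4}$, the single-crossing property (Lemma~\ref{lem:axiom}) forces $D_{i+1}\cap\text{hull}(A)$ and $D_{i+3}\cap\text{hull}(A)$ inside $D$, and these two sets are disjoint, yielding $|A\cap D|\geq 2\eps|A|$.

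Two smaller problems. First, your proof of the flanking property assumes that $\partial\text{hull}(A)$ between two consecutive points of $A\cap\partial\text{hull}(A)$ is a single unit-circle arc; it is in general an upper envelope of several arcs and horizontal segments, so $f-\beta$ can change sign twice on that gap (negative, positive, negative), and Lemma~\ref{lem:axiom} does not apply directly. The property itself is true, but the correct route is via Lemma~\ref{lem:hull}(3) (the set $\{x: f\geq\beta\}$ is an interval) combined with Lemma~\ref{lem:hull}(4) (a disk meeting $A$ must contain a boundary vertex, which rules out the interval lying strictly inside a gap). Second, your amortization for Property~1 only pairs a light slab with the \emph{next atomic gap}, not the next slab, and ignores boundary effects (the last slab, double-counting of net vertices); as written it gives $|N|\leq 2/\eps + O(1)$ rather than $\lfloor 2/\eps\rfloor$. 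These two issues are repairable; the failure of Property~4 is not, without changing the construction.
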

\begin{proof}
Let $M=A\cap \partial \text{hull}(A)$ be the set of points in $A$
lying on the boundary of $\text{hull}(A)$.
By Lemma~\ref{lem:hull}(4), if a unit disk $D\in \mathcal{D}$ contains
any point in $A$, it contains a point from $M$.
Consequently $M$ is an $\eps$-net for $(A,\mathcal{D})$ for every
$\eps>0$.
For a given $\eps>0$, let $N=N_\eps$ be a minimal subset of $M$ that
is an $\eps$-net for $(A,\mathcal{D})$ (obtained, for example, by successively
deleting points from $M$ while we maintain an $\eps$-net).

Let $N=\{v_1,\ldots , v_k\}$, where we label the elements in $N$ by
increasing $x$-coordinates.
For notational convenience, we introduce a point $v_0\in \partial
\text{hull}(A)$ on a vertical line one unit left of $v_1$, and
$v_{k+1}\in \partial \text{hull}(A)$ on a vertical line one unit right
of $v_k$.
For $i=1,\ldots k$, the minimality of $N$ implies that $N\setminus \{v_i\}$ is not an $\eps$-net,
and so there exists a unit disk $D \in \mathcal{D}$ such that
$|A\cap D|\geq \eps |A|$ and $D \cap N=\{v_i\}$. Let $D_i \in \mathcal{D}$ be such a disk,
with $|A\cap D_i|\geq \eps |A|$ and $D_i\cap N=\{v_i\}$.
By Lemma~\ref{lem:hull}(3), $D_i$ contains a connected arc of the $x$-monotone curve
$\partial \text{hull}(A)$, but $D_i$ contains neither $v_{i-1}$ nor $v_{i+1}$.
In particular, the $x$-coordinate of every point in $A\cap D_i$
lies between that of $v_{i-1}$ and $v_{i+1}$. Consequently, every point in $A$
lies in at most two disks $D_i$, $1\leq i\leq k$. It follows that
\[k\cdot \eps |A|
=\sum_{i=1}^k \eps |A|
\leq \sum_{i=1}^k |A\cap D_i|
\leq 2|A|,\]
hence $k\leq \lfloor 2/\eps\rfloor$. This proves (i).

\begin{figure}[htbp]
 \centering
 \includegraphics[width=0.75\textwidth]{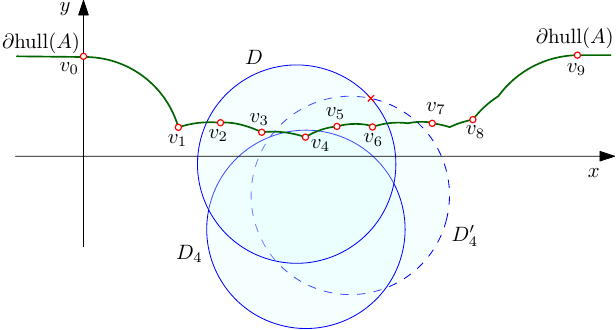}
 \caption{Illustration for the proof of Lemma~\ref{lem:epsilon}(iv) with $i=2$ and $j=4$.
 A unit disk $D$ with $D\cap N=\{v_2,v_3,v_4,v_5,v_6\}$,
 and a unit disk $D_4$ with $v_4\in D_4$ and $v_3,v_5\notin D_4$.
 A hypothetical unit disk $D_4'$ (dashed) such that $v_4\in D_4'$, and
 $\partial D'_4\cap \text{hull}(A)$ crosses $\partial D\cap \text{hull}(A)$.
 }
  \label{fig:schamtic1}
\end{figure}

By construction, we have $N\subset M\subset \partial \text{hull}(A)$,
which confirms (ii), and (iii) follows from Lemma~\ref{lem:hull}(3).
It remains to prove (iv); refer to Fig.~\ref{fig:schamtic1}.
Assume that $D\in \mathcal{D}$ and $|N\cap D|\geq 5$.
By~(iii), we may assume that $D$ contains five consecutive points in $N$, say, $v_i,\ldots , v_{i+4}$.
For $j\in \{i+1,i+2,i+3\}$, consider the disk $D_j\in \mathcal{D}$ defined above,
where $v_j\in D_j$ but $v_{j-1},v_{j+1}\notin D_j$. In particular,
$D_j\cap (\partial \text{hull}(A))$ lies between $v_{j-1}$ and $v_{j+1}$.
By Lemma~\ref{lem:axiom}, the circular arcs $\partial D\cap \text{hull}(A)$
and $\partial D_j\cap \text{hull}(A)$ cross at most once.
However, if they cross once, then $D_j$ contains
one of the endpoints of $D\cap (\partial \text{hull}(A))$,
and by Lemma~\ref{lem:hull}(3) it contains
$\{v_i,\ldots, v_j\}$ or $\{v_j,\ldots , v_{i+4}\}$,
which is a contradiction. We conclude that
$\partial D\cap \text{hull}(A)$ and $\partial D_j\cap \text{hull}(A)$ do not cross.
Consequently, $D_j\cap \text{hull}(A)\subset D\cap \text{hull}(A)$, hence $A\cap D_j\subset A\cap D$.
As noted above, $|A\cap D_j|\geq \eps |A|$. Furthermore,
$A\cap D_{i+1}$ and $A\cap D_{i+3}$ are disjoint as they are on opposite sides of
the vertical line passing through $v_{i+2}$. Thus we obtain
$|A\cap D|\geq |A\cap (D_{i+1}\cup D_{i+3})|
\geq  |A\cap D_{i+1}|+|A\cap D_{i+3}|
\geq 2\eps |A|$, as claimed.
\end{proof}

Let $A$ and $B$ be two disjoint point sets above and below the $x$-axis, respectively.
Denote by $U(A,B)$ the unit disk graph on $A \cup B$ and by $G(A,B)$
the bipartite subgraph of $U(A,B)$ consisting of all edges between $A$ and $B$.

\begin{lemma}\label{lem:bipartite+}
Let $P=A \cup B$ be a set of $n$ points in the plane such that
$\diam(A) \leq 1$, $\diam(B) \leq 1$,
and $A$ (resp., $B$) is above (resp., below) the $x$-axis.
Then there is a subgraph $H$ of $U(A,B)$ with $O(n\log n)$ edges such that
for every edge $ab$ of $G(A,B)$, $H$ contains a path of length at most $2$ between $a$ and $b$.
\end{lemma}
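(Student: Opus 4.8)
The plan is to build $H$ by a recursive, $\eps$-net-based construction that exploits two features of the hypothesis $\diam(A)\le 1$, $\diam(B)\le 1$: first, any two points of $A$ (likewise of $B$) are within unit distance, so $A$ and $B$ are each cliques in $U(A,B)$; and second, a point $a\in A$ and a point $b\in B$ are adjacent exactly when $a$ lies in the unit disk $D_b\in\mathcal D$ centered at $b$. Because $A$ is a clique, to certify a $2$-hop path for an edge $ab$ it suffices to exhibit a \emph{common} neighbor $c\in A\cap D_b$ together with the two edges $ac$ and $cb$ in $H$: the edge $ac$ is automatically present in $U(A,B)$. Thus the task reduces to a covering problem for the range space $(A,\mathcal D)$: select few $A$-$A$ edges and few $A$-$B$ edges so that every disk $D_b$ has, for each $a\in A\cap D_b$, a selected common neighbor in $A\cap D_b$. (A symmetric option routes through a common neighbor in $B$, which I expect to use to balance the two sides.)

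First I would fix a constant $\eps$ and invoke Lemma~\ref{lem:epsilon} to obtain an $\eps$-net $N=\{v_1,\dots,v_k\}\subset A\cap\partial\mathrm{hull}(A)$, ordered by $x$-coordinate, with $k=O(1)$. Call a disk $D_b$ \emph{dense} if $D_b\cap N\neq\emptyset$ and \emph{sparse} otherwise; by the net property every sparse disk satisfies $|A\cap D_b|<\eps|A|$. Dense disks are cheap: I would add all $O(|A|)$ edges from each $v_i$ to every point of $A$, and, for each dense $b$, one edge from $b$ to a net point $v_i\in D_b$. Then for every $a\in A\cap D_b$ the path $a,v_i,b$ works, since $av_i\in U(A,B)$ (clique) and $v_ib\in U(A,B)$ (as $v_i\in D_b$). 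This spends $O(|A|+|B|)$ edges and leaves only the sparse disks.

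The geometric heart is a \emph{confinement} claim for sparse disks: if $D_b\cap N=\emptyset$, then the whole neighborhood $A\cap D_b$ lies in a single slab between two consecutive net points. I would prove this using the $x$-monotone boundary $\partial\mathrm{hull}(A)$ from Lemma~\ref{lem:hull}: for any $a\in A\cap D_b$ the hull point directly below $a$ also lies in $D_b$ (since $D_b$ is centered below the $x$-axis, lowering the $y$-coordinate keeps a point inside $D_b$), hence the $x$-projection of $A\cap D_b$ equals that of the connected arc $D_b\cap\partial\mathrm{hull}(A)$ given by Lemma~\ref{lem:hull}(3); since every net point is itself a hull point, an interior net $x$-coordinate would force some $v_j\in D_b$, contradicting sparsity. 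Consequently the $k$ net points partition $A$ into $O(1)$ slabs so that each sparse disk is entirely local to one slab, and I would recurse on each slab together with its sparse disks.

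The cost is $\sum_{\text{nodes}}O(|A'|+|B'|)$, so the whole argument hinges on bounding the recursion, and \emph{this balance is the main obstacle}: a minimal $\eps$-net may place all of its points on one side, leaving one slab with almost all of $A'$ (for instance when the points of $A$ nearly share an $x$-coordinate but spread out in $y$), so slab sizes need not shrink geometrically. To force $O(\log n)$ depth I plan to drive the recursion by a balanced separator rather than by the raw net: split $A'$ at its median $x$-coordinate, cover at the current node exactly the disks \emph{crossing} the split, and recurse on the two balanced halves for the non-crossing disks. Every crossing disk contains the hull point lying on the splitting line, so I would cover all crossing disks at a node by an $\eps$-net anchored at that line; here property~(iv) of Lemma~\ref{lem:epsilon} is what I need, since it bounds by a constant the number of net points inside a disk that is only marginally dense, so each disk is charged $O(1)$ edges at its unique crossing node. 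Summing $O(|A'|)$ per node over the $O(\log n)$ balanced levels (whose point sets partition $A$ at each level) and $O(1)$ per disk at its crossing node yields $O(n\log n)$ edges in total, with the symmetric $B$-side construction covering any edge whose $A$-side routing is obstructed. Verifying that the anchored net captures every crossing disk — equivalently, that no sparse disk straddles the balanced separator — and reconciling the confinement claim (stated for net-induced slabs) with the median-based separator, is the delicate point I expect to occupy most of the proof.
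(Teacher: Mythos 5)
Your setup is sound: the reduction to the range space $(A,\mathcal{D})$, the use of Lemma~\ref{lem:epsilon}, and the confinement claim for sparse disks (via the vertical-projection argument and Lemma~\ref{lem:hull}(2)--(3)) are all correct. But the recursion does not close, and the gap is exactly where you flagged it. The sub-claim you say you would need to verify --- ``no sparse disk straddles the balanced separator'' --- is simply false: a unit disk $D_b$ containing only two points of $A'$, one on each side of the median line, is sparse (a constant-size $\eps$-net has no reason to hit it) yet crossing. More generally, a sparse disk is confined by your own argument to the slab between two consecutive net points, and the slab containing the median $x$-coordinate can straddle the separator; every sparse disk living in that one slab is a crossing disk. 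Your proposed remedy --- an ``$\eps$-net anchored at the splitting line'' --- cannot capture these disks, because the whole point of an $\eps$-net is that it only hits ranges containing $\geq \eps|A'|$ points; the common curve point $w_m\in\partial\,\text{hull}(A')$ that all crossing disks contain is not a point of $A'$, so it cannot serve as a star center, and Lemma~\ref{lem:hull}(4) only gives you \emph{some} point of $A'\cap\partial\,\text{hull}(A')$ in each crossing disk, with no bound on how many distinct such points are needed across all crossing disks. As written, the sparse crossing disks are handled nowhere, so correctness fails; and any fix that recurses on the median-straddling slab as a third child destroys the balance that gives you $O(\log n)$ depth.

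It is worth seeing how the paper's proof sidesteps this entirely: it never recurses and never has a ``sparse'' case. It partitions the disks into $O(\log n)$ weight classes $\mathcal{B}_i=\{D: |A|/2^i\leq |A\cap D|<|A|/2^{i-1}\}$ and, for class $i$, takes an $\eps_i$-net $N_i$ with $\eps_i=2^{-i}$, whose size $\lfloor 2/\eps_i\rfloor$ is allowed to grow with $i$. Every disk in class $i$ is dense relative to its own net, so each $D_b$ is hit and each $b$ attaches to the \emph{leftmost} net point of $N_i$ in $D_b$ (degree $1$ for every $b$). The per-class edge count on the $A$-side is then controlled by Lemma~\ref{lem:epsilon}(iv): a disk in class $i$ has $|A\cap D|<2\eps_i|A|$ and hence contains at most $4$ points of $N_i$, so each $a\in A$ lies in at most $5$ stars per class, giving $O(n)$ edges per class and $O(n\log n)$ overall. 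If you want to salvage your divide-and-conquer, the lesson is that the net size must scale inversely with the weight of the disks it is responsible for; with a single constant-size net per node you cannot avoid leaving light disks uncovered.
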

\begin{proof}
Our proof is constructive.
For every point $b\in B$, let $D_b$ be the unit disk centered at $b$.
Consider the set system $(A,\mathcal{B})$, where $\mathcal{B}=\{D_b:b\in B\}$.
We partition the set of disks $\mathcal{B}$ into
$O(\log n)$ subsets based on the number of points of $A$ contained in the disks.
For every $i=1,\ldots , \lceil \log n\rceil$, let
\[\mathcal{B}_i=\left\{D\in \mathcal{B}: \frac{|A|}{2^i}\leq |A\cap D|< \frac{|A|}{2^{i-1}} \right\}.\]

For every $i=1,\ldots , \lceil \log n\rceil$, let $\eps_i=\frac{1}{2^i}$.
Lemma~\ref{lem:epsilon} yields an $\eps_i$-net $N_i\subset A$
of size at most $\lfloor 2/\eps_i\rfloor = 2^{i+1}$ for $(A,\mathcal{B}_i)$.

\begin{figure}[htbp]
 \centering
 \includegraphics[width=0.99\textwidth]{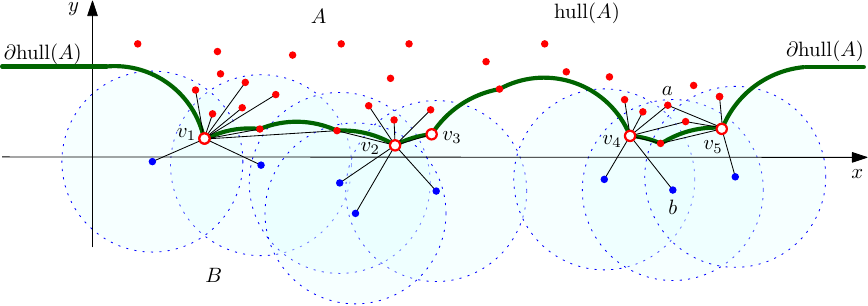}
 \caption{Set $A$ (resp., $B$) is above (resp., below) the $x$-axis.
  The points in an $\eps_i$-net $N_i=\{v_1\ldots ,v_5\}$
  are marked with hollow dots. The graph $H_i$ is a union of stars
  centered at $v_1,\ldots , v_5$.
  (To avoid clutter, the depicted point set does not meet conditions
  $\diam(A) \leq 1$ and $\diam(B) \leq 1$ of Lemma~\ref{lem:bipartite+}.)}
  \label{fig:twohop}
\end{figure}

We construct the graph $H$ as a union of stars; see Fig.~\ref{fig:twohop} for an illustration.
For every $i=1,\ldots , \lceil \log n\rceil$ and every $v\in N_i$,
we create a star centered at $v$ as follows.
Let $B_i(v)$ be the set of points $b\in B$ such that
$D_b\in \mathcal{B}_i$ (that is, $|A|/2^i\leq |A\cap D_b|< |A|/2^{i-1}$),
$v\in D_b$, and $v$ is the leftmost point in $N_i\cap D_b$.
Let $A_i(v)$ be the set of points $a\in A$ contained in
unit disks centered at some point in $B_i(v)$.
Let $S_i(v)$ be the star on $A_i(v)\cup B_i(v)$ centered at $v$.
By construction, every point in $B_i(v)$ is at distance at most 1 from $v$,
and $\diam(A)\leq 1$; this implies that $S_i(v)$ is a subgraph of $U(A,B)$.
Let $H_i$ be the union of all stars centered at vertices in $N_i$;
and let $H$ be the union of the graphs $H_i$ for $i=1,\ldots, \lceil \log n \rceil$.
Note that $H$ is a union of stars in $U(A,B)$, hence a subgraph of $U(A,B)$.

To prove correctness, we show that for every edge $ab$ of $G(A,B)$ (with $a \in A$, $b \in B$),
$H$ contains a path of length $2$ between $a$ and $b$.
Since $ab$ is an edge of $G(A,B)$, we have $|ab|\leq 1$ hence $a\in D_b$.
There exists an index $i\in \{1,\ldots , \lceil \log n\rceil\}$ for which $D_b\in \mathcal{B}_i$.
As $|A\cap D_b|\geq |A|/2^i =\eps_i |A|$, and $N_i$ is an $\eps_i$-net for $(A,\mathcal{B}_i)$,
we have $D_b\cap N_i \neq\emptyset$.
Let $v$ be the leftmost point in $D_b\cap N_i$.
Then by construction $a\in A_i(v)$ and $b\in B_i(v)$.
If $a=v$, then the star $S_i(v)$ contains the edge $ab$,
otherwise $S_i(v)$ contains the path $a,v,b$ of length 2.

It remains to derive an upper bound on the number of edges in $H$.
We claim that $H_i$ has $O(n)$ edges for all $i=1,\ldots , \lceil \log n\rceil$,
which implies that $H$ has $O(n\log n)$ edges overall.

Let $b\in B$. There is a unique index $i$ such that
$|A|/2^i\leq |A\cap D_b|<|A|/2^{i-1}$;
and there is a unique leftmost point $v$ in $N_i\cap D_b$.
Therefore, $b$ is a leaf of only one star $S_i(v)$,
and so its degree is at most 1 in $H_i$, hence in $H$.

Let $i\in \{1,\ldots ,\lceil \log n\rceil\}$.
Assume that $N_i=\{v_1,\ldots, v_k\}$ is sorted by increasing $x$-coordinates.
We also introduce points $v_0$ and $v_{k+1}$ on $\partial \text{hull}(A)$ as
specified previously.

Let $a\in A$; refer to Fig.~\ref{fig:twohop}.
Assume that $a$ is in a star $S_i(v_j)$ for some $v_j\in N_i$.
Assume further that the $x$-coordinate of $a$ is between that of $v_{\ell-1}$ and $v_\ell$
for some $\ell\in \{1,\ldots ,k+1\}$.
Since $a$ is in $S_i(v_j)$, there exists a point $b\in B$ such that
$a\in D_b$, $D_b\in \mathcal{B}_i$, and $v_j$ is the leftmost point in $D_b\cap N_i$.
Since $D_b\in \mathcal{B}_i$, we have $|A\cap D_b|<2\eps_i|A|$.

By Lemma~\ref{lem:epsilon}(iv), $D_b$ contains at most 4 points from the net $N_i$.
In particular, the unit circle $\partial D_b$ intersects $\partial \text{hull}(A)$
in two points: once between $v_{j-1}$ and $v_j$, and once between $v_j$ and $v_{j+4}$.
Consequently, $0\leq \ell-j\leq 4$, thus $a$ is in at most 5 possible stars $S_i(v_j)$,
$v_j\in N_i$. It follows that $H_i$ has at most $5|A|+|B|\leq 5n$ edges, as required.
\end{proof}

We now consider the general case.
\begin{theorem} \label{thm:2hop+}
  Every $n$-vertex unit disk graph has a (possibly nonplane)
  $2$-hop spanner with $O(n\log n)$ edges.
\end{theorem}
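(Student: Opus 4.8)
The plan is to reduce the general problem to a bounded number of bipartite instances, each handled by Lemma~\ref{lem:bipartite+}. First I would tile the plane with a hexagonal tiling $\T$ of unit-diameter cells, exactly as in Subsection~\ref{ssec:5hop}, so that every point of $P$ lies in the interior of some cell and any two points in a common cell are at distance at most $1$. The intra-cell edges are easy: for each nonempty cell $\sigma$ I add a spanning star on $P\cap\sigma$ centered at an arbitrary point, which yields a path of length at most $2$ between any two points of $\sigma$ and contributes only $O(n)$ edges in total.

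Next I would handle edges between distinct cells. If $pq\in E$ with $p\in\sigma$ and $q\in\tau$, $\sigma\neq\tau$, then $|pq|\le 1$, and by Corollary~\ref{cor:leq11} the cell $\tau$ lies among the at most $11$ cells surrounding $\sigma$; hence each cell is paired with only $O(1)$ other cells. For each such pair $(\sigma,\tau)$ of nonempty cells I would build a $2$-hop spanner of the bipartite UDG between $P\cap\sigma$ and $P\cap\tau$ as follows. The cells $\sigma$ and $\tau$ are convex with disjoint interiors, hence admit a separating line $\ell$; applying a rigid motion that sends $\ell$ to the $x$-axis and places $\sigma$ above and $\tau$ below it, I am exactly in the setting of Lemma~\ref{lem:bipartite+} (both $\diam(\sigma)\le 1$ and $\diam(\tau)\le 1$, and since points lie in cell interiors they are strictly separated by $\ell$). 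Because distances and unit disks are preserved by rigid motions, the lemma applies verbatim and yields a subgraph with $O((n_\sigma+n_\tau)\log n)$ edges, where $n_\sigma=|P\cap\sigma|$, containing a $2$-hop path for every cross edge between $\sigma$ and $\tau$. I map these edges back and add them to $H$.

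Correctness then follows from Lemma~\ref{lem:equiv}: every edge of the UDG is either intra-cell, in which case the spanning star provides a path of length at most $2$, or it joins two distinct cells $\sigma,\tau$, in which case it is a cross edge of one of the bipartite instances above and is $2$-spanned there. For the edge count, the stars contribute $O(n)$ edges, and since every cell participates in at most $11$ pairs, summing the bipartite bounds gives $\sum_{(\sigma,\tau)} O((n_\sigma+n_\tau)\log n)\le O(\log n)\sum_\sigma 11\, n_\sigma = O(n\log n)$, as required.

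I expect the main obstacle to be purely in the reduction bookkeeping rather than in any new geometry: one must verify that any two relevant cells are genuinely linearly separable (so that the hypotheses of Lemma~\ref{lem:bipartite+} hold after a rigid motion), and that each cell enters only $O(1)$ bipartite instances, which is precisely what prevents the logarithmic factor from blowing up. The substantive work — the $\eps$-net construction for unit disks and the resulting bipartite $2$-hop spanner — has already been carried out in Lemmas~\ref{lem:epsilon} and~\ref{lem:bipartite+}, so the theorem should follow by assembling these pieces.
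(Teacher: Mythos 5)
Your proposal is correct and follows essentially the same route as the paper's proof: hexagonal tiling with unit-diameter cells, a spanning star per cell for intra-cell edges, and an application of Lemma~\ref{lem:bipartite+} to each of the $O(1)$ pairs of nearby cells via a separating line playing the role of the $x$-axis. The only cosmetic difference is that you bound the number of relevant cell pairs per cell by $11$ via Corollary~\ref{cor:leq11} where the paper uses $18$; either constant suffices.
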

\begin{proof}
Let $P$ be a set of $n$ points in the plane.
Consider a tiling of the plane with regular hexagons of unit diameter; and assume that
no point in $P$ lies on the boundary of any hexagon. Let $\T$ be the set of nonempty
hexagons. Then $P$ is partitioned into $O(n)$ sets $\{P \cap \sigma: \sigma \in \T\}$.
As noted in Section~\ref{ssec:5hop}, for every $\sigma \in \T$, there are $18$ other cells
within unit distance; see~Fig.~\ref{fig:hex}\,(left).

For each cell $\sigma \in \T$, choose an arbitrary vertex $v_\sigma \in P \cap \sigma$,
and create a star $S_\sigma$ centered at $v_\sigma$ on the vertex set $P \cap \sigma$.
The overall number of edges in all stars $S_\sigma$, $\sigma \in \T$, is
\[ \sum_{\sigma \in \T}(|P \cap \sigma|-1) = n-|\T| \leq n. \]
For every pair of cells $\sigma_i,\sigma_j \in \T$, where $d(\sigma_i,\sigma_j) \leq 1$,
consider the bipartite graph $G_{i,j} =G(P \cap \sigma_i, P \cap \sigma_j)$.
By Lemma~\ref{lem:bipartite+}, there is a graph $H_{i,j}$
of size
\[ O\big((|P\cap \sigma_i| + |P\cap \sigma_j|)\log (|P\cap \sigma_i| +
|P\cap \sigma_j|)\big) = O\big((|P \cap \sigma_i| +
|P \cap \sigma_j|)\log n\big). \]
Since every vertex appears in at most $18$ such bipartite graphs,
the total number of edges in these graphs is at most
$O \left( \sum_{\sigma \in \T} |P\cap \sigma|\log n \right) =O(n\log n)$.

We show that the union of the stars $S_\sigma$, $\sigma \in \T$, and the graphs $H_{i,j}$
is a $2$-hop spanner. Let $ab$ be an edge of the unit disk graph. If both $a$ and $b$
are in the same cell, say $\sigma \in \T$, then $ab$ is an edge in the star or
the star $S_\sigma$ contains the path $a,v_\sigma,b$.
Otherwise, $a$ and $b$ lie in two distinct cells, say $\sigma_i,\sigma_j \in \T$,
such that $d(\sigma_i,\sigma_j) \leq |ab| \leq 1$. By Lemma~\ref{lem:bipartite+}
(where the role of the $x$-axis is taken by any separating line),
$H_{i,j}$ contains a path of length at most $2$ between $a$ and $b$, as required.
\end{proof}

\section{Lower bounds for plane hop spanners}  \label{sec:lb}

A trivial lower bound of $2$ for the hop stretch factor of plane subgraphs of UDGs
can be easily obtained by taking the four corners of a square of side-length $\frac12$.
In this case, the UDG is the complete graph but a plane spanner cannot contain both diagonals
of the square. Our main result in this section is
a lower bound of $4$ for sufficiently large $n$ (cf.~Theorem~\ref{thm:many-points}).
We begin with a lower bound of $3$
that holds already for $n=8$.

\begin{theorem}
\label{thm:8points}
For every $n\geq 8$, there exists an $n$-element point set $S$ on a circle
such that every plane hop spanner on $S$ has hop stretch factor at least $3$.
\end{theorem}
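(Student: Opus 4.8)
The plan is to exhibit an explicit $n$-point set $S$ on a circle $C$ whose unit disk graph $G$ admits no plane $2$-hop spanner; by Lemma~\ref{lem:equiv} this amounts to showing that for every plane subgraph $G'\subseteq G$ there is an edge $uv\in E(G)$ with $h(u,v)\ge 3$ in $G'$. I would first record two structural reductions valid for any point set on a circle. Since all points lie on $C$, they are in convex position, so every plane straight-line subgraph of $G$ is outerplanar, and I may assume $G'$ is an inclusionwise \emph{maximal} plane subgraph of $G$: adding edges only shortens hop distances, so a maximal plane subgraph is a $2$-spanner whenever any plane subgraph is. Moreover, every polygon hull edge $p_ip_{i+1}$ that happens to be a unit-disk edge crosses no chord of the polygon, so I may assume $G'$ contains all such hull edges. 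This lets me treat $G'$ as the polygon boundary together with a family of pairwise non-crossing chords, each of length at most~$1$.

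For the construction I would place the points so that $G$ is dense enough to have no ``shortcut'' vertex, yet structured so that the natural two-layer routing is blocked. Concretely, I would pick the radius and (deliberately non-uniform) angular positions so that (a)~no vertex is adjacent to all others and no single edge dominates $S$, ruling out fan- and double-fan-type $2$-spanners; and (b)~the ``alternating inner cycle'' that $2$-spans the cyclic power $C_n^3$ (the boundary plus every other short chord) is \emph{unavailable}, because the alternate short chords it needs exceed unit length. For $n=8$ this yields a specific symmetric configuration. For $n>8$ I would append the extra points as a tight cluster on $C$ immediately beside one designated vertex $p$, angularly so close that each new point has exactly the neighborhood of $p$ among the original vertices (a ``clone'' of $p$) and all clones are mutually adjacent; choosing the critical long edge far from $p$ then prevents the clones from serving as new common neighbors, so the $8$-point lower bound is inherited.

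With $G'$ normalized as above, I would isolate one or a few critical unit-disk edges $uv$ and enumerate their common neighbors in $G$. A point special to circles is that two crossing unit-disk edges \emph{always} share a common neighbor: any point on the minor arc subtended by a chord of length $\le 1$ is within unit distance of both endpoints, hence lies in the lens of that chord. Thus the naive ``two crossing edges with empty lens must both be present'' argument is unavailable, and the proof must instead track which of the few common-neighbor detours can coexist. For each candidate common neighbor $w$, the detour $u,w,v$ forces both chords $uw,wv$ into $G'$; I would show that the chords needed to repair all critical edges pairwise cross in a pattern with no consistent non-crossing realization, i.e. a cyclic chain of forced omissions and forced detours that cannot all be drawn plane. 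For $n=8$ this reduces to a finite case analysis over the maximal non-crossing chord families compatible with the normalized $G'$.

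The main obstacle is exactly the flexibility that makes $C_n^3$ two-hop-spannable: a long edge typically has two interior vertices through which it can be routed by a single extra chord, and such detours can be chosen in an alternating, nested way that stays plane. The construction must be engineered to destroy this option simultaneously for enough edges that no consistent choice survives, and verifying this for the explicit $S$ — checking that every maximal non-crossing chord family leaves some unit-disk edge at hop distance $\ge 3$ — is the delicate, computation-heavy heart of the proof. Designing the $8$-point core so that the case analysis closes, and confirming that the clone-extension preserves it for all $n\ge 8$, is where I expect the real work to lie.
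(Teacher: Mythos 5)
Your high-level strategy is the same as the paper's: fix an explicit $8$-point set on a circle with the unit-distance threshold tuned so that the convex-position cycle is present but certain longer chords are just over or just under length $1$, normalize the putative plane $2$-hop spanner to a maximal non-crossing family of chords containing all consecutive edges, and then close the argument by a finite case analysis, extending to general $n$ by adding points. The paper realizes this with the concrete configuration $|p_2p_3|=\dots=|p_6p_7|$, $|p_1p_2|=|p_7p_8|=1.1|p_2p_3|$, $|p_1p_4|<1$, $|p_2p_6|=|p_3p_7|=1$, $|p_1p_5|>1$, and organizes the case analysis by the longest chord present in $G'$ (seven cases, each forcing a short chain of chords that leaves some unit-distance pair, e.g.\ $p_3p_7$, at hop distance at least $3$).

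The genuine gap is that your proposal never produces the point set and never performs the case analysis; it only asserts that a configuration with properties (a) and (b) exists and that the enumeration will close, and you explicitly defer this as ``where I expect the real work to lie.'' But that deferred part \emph{is} the proof: without a concrete choice of radii and angular positions one cannot verify that the forced detours pairwise obstruct each other, and it is entirely possible to write down an $8$-point set on a circle satisfying your qualitative conditions that nevertheless admits a plane $2$-hop spanner (indeed, the paper's own Theorem~\ref{thm:circle} shows that generic circular configurations are quite forgiving). Your clone-based extension to $n>8$ is a reasonable sketch (and the paper is equally terse on this step), but it too needs an argument that chords incident to the clones cannot be used to repair the critical edges without creating crossings, since the clones enlarge the pool of potential two-hop intermediaries and of non-crossing chord families. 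As written, the proposal is a correct plan of attack rather than a proof.
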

\begin{proof}
Let $P=\{p_1,\ldots, p_8\}$ be a set of $8$ successive points on a circle of radius $r \geq 1$,
so that $p_1p_8$ is a horizontal chord,
$|p_2p_3|=|p_3p_4|=|p_4p_5|=|p_5p_6|=|p_6p_7|$, $|p_1p_2|=|p_7p_8|=1.1 |p_2p_3|$,
$|p_1p_4|<1$, and  $|p_2p_6|=|p_3p_7|=1$.
The UDG of $P$ is shown in Fig.~\ref{fig:lb3}\,(left). Note that  $|p_1p_5| =|p_4p_8|>1$;
and that the orthogonal bisector of $p_1p_8$ is a vertical axis of symmetry.
Since $P$ is in convex position we may assume that $p_ip_{i+1} \in E'$ for $i=1,\ldots,7$.
Suppose that $G'=(P,E')$ is a plane hop spanner with hop stretch factor~$2$.
Define the \emph{span} of an edge $p_i p_j$ ($i<j$), as $j-i$.
We distinguish between two cases depending on whether $E'$ contains at least one edge of span $2$
whose endpoints are in $\{p_2,\ldots,p_7\}$.
    	
\begin{figure}[htbp]
\centering
\includegraphics{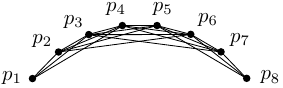}
\hspace{1cm}
\includegraphics{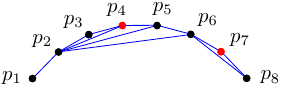}
\caption{Left: the $8$-element point set $P$ and its UDG. Right: a $3$-hop plane spanner
of $P$; for the hop distance between the two red points, $p_4$ and $p_7$, is 3.}
\label{fig:lb3}
\end{figure}
	
\emph{Case 1: $E'$ contains at least one edge of span $2$ whose endpoints are in $\{p_2,\ldots,p_7\}$.}
Assume first that $p_3p_5 \in E'$ or $p_4p_6 \in E'$. Assume w.l.o.g.\ that $p_3p_5 \in E'$.
Since $h(p_1,p_4) \leq 2$, we have $p_1p_3 \in E'$.
Since $h(p_2,p_6) \leq 2$, we have $p_3p_6 \in E'$.
Since $h(p_4,p_7) \leq 2$, we have $p_3p_7 \in E'$.
Then $\rho(p_5,p_8)$ has at least $3$ hops, a contradiction.

We can subsequently assume that $p_3p_5, p_4p_6 \notin E'$.
Assume next that $p_2p_4 \in E'$ or $p_5p_7 \in E'$. Assume w.l.o.g.\ that $p_2p_4 \in E'$.
Since $h(p_3,p_6) \leq 2$, we have $p_2p_6 \in E'$.
Then $\rho(p_4,p_7)$ has at least $3$ hops, a contradiction.

\medskip
\emph{Case 2: $E'$ contains no edge of span $2$ whose endpoints are in $\{p_2,\ldots,p_7\}$.}
Since $h(p_3,p_6) \leq 2$, we have $p_3p_6 \in E'$, $p_2p_6 \in E'$, or $p_3p_7 \in E'$.
If $p_3p_6 \in E'$, then $\rho(p_2,p_5)$ has at least $3$ hops, a contradiction.
Assume w.l.o.g.\ that $p_2p_6 \in E'$.
Then $\rho(p_1,p_4)$ has at least $3$ hops, a contradiction.

\medskip
Thus, we have shown that every plane hop spanner on $P$ has hop stretch factor of at least $3$.
For every $n \geq 8$, we can add $n-8$ points on the circle beyond $p_8$ such that every plane hop spanner
on the resulting set $S$ of $n$ points has hop stretch factor of at least~$3$.
\end{proof}

We next derive a better bound assuming that $n$ is sufficiently large.

\begin{theorem} \label{thm:many-points}
For every sufficiently large $n$, there exists an $n$-element point set $P$ on a circle
such that every plane hop spanner on $P$ has hop stretch factor at least $4$.
\end{theorem}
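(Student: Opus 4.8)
The plan is to mimic and scale up the strategy behind Theorem~\ref{thm:8points}: exhibit a single fixed, highly symmetric point set $P_0$ of successive points on a circle for which \emph{no} plane graph $3$-hop-connects every unit-disk edge, and then obtain an $n$-element set for all large $n$ by padding $P_0$ with additional nearby points that cannot help, exactly as in the final step of the proof of Theorem~\ref{thm:8points}. By Lemma~\ref{lem:equiv} it suffices to produce one unit-disk edge whose endpoints lie at hop distance at least $4$ in every plane subgraph $G'$ of the UDG; the phrase ``sufficiently large $n$'' should then just mean $n\ge |P_0|$.

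First I would fix the combinatorial skeleton by placing the points of $P_0$ in convex position on an arc of a circle of radius $r\ge 1$, spaced so that the UDG is a prescribed ``power of a path'' -- each $p_i$ adjacent exactly to $p_{i\pm1},\dots,p_{i\pm m}$ for a small constant $m$ -- with a few critical distances tuned to equal $1$ exactly (the analogue of $|p_2p_6|=|p_3p_7|=1$ together with the $1.1$ perturbation used for $n=8$). Convex position then supplies the two reductions that drive everything. First, the hull edges $p_ip_{i+1}$ cross nothing, so we may assume they all lie in $G'$ (the closing segment $p_1p_n$ is long and is not a UDG edge). Second, the remaining edges of $G'$ are pairwise non-crossing chords, hence form a laminar family of index-intervals; moreover every bounded face of $G'$ is a convex sub-polygon whose ``closing'' edge $q_1q_t$ (joining its extreme vertices) is itself a UDG edge, and therefore spans at most $m$ indices. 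Thus each bounded face covers at most $m$ consecutive indices, while the region near $p_1p_n$ lies in the unbounded face.

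Next I would run a case analysis in the spirit of the seven-case argument of Theorem~\ref{thm:8points}, but now under the stronger hypothesis that $G'$ is a $3$-hop spanner. The laminar/face structure severely constrains which chords can coexist: once a long chord $p_ip_{i+m}$ is placed it isolates the points strictly between $p_i$ and $p_{i+m}$ into a \emph{pocket} bounded from outside by that chord, and any $3$-hop path for a unit-disk edge with an endpoint inside a pocket must use the pocket boundary or tunnel through a bounded number of further chords. Tracking, for the longest chord actually present in $G'$, how it forces or forbids its neighbours, I would argue that every admissible laminar family leaves some tuned unit-disk edge -- by symmetry a ``long'' one of span $m$ -- with no three-chord path between its endpoints, contradicting Lemma~\ref{lem:equiv}. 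Finally, inserting the extra $n-|P_0|$ points in convex position arbitrarily close to an existing vertex reproduces the same forced edge and its pocket, yielding the bound for every sufficiently large $n$ and matching the upper bound of Theorem~\ref{thm:circle}.

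The main obstacle is designing $P_0$ -- the value of $m$, the number of points, and the exact tuned distances -- so that this case analysis actually closes. Relaxing the target from $2$ to $3$ hops gives the spanner much more freedom, since a three-edge detour can leave a pocket and re-enter it, so the gadget must contain \emph{several} mutually constraining long unit-distance chords whose pockets overlap, and one must verify that no non-crossing selection of short chords simultaneously provides a $\le 3$-hop path for all of them. Equivalently, the crux is a purely combinatorial lemma: a laminar family of span-$\le m$ intervals on $[1,n]$ cannot $3$-hop-connect every admissible pair, for a suitable tuned UDG. The geometry enters only through the convex-position reductions above and through the tuning that makes the decisive chords have length exactly $1$; verifying that the resulting finite case distinction is exhaustive is where the real work -- and the risk of an overlooked configuration -- lies.
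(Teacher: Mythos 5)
Your proposal is a program, not a proof: the step you yourself identify as ``the real work'' --- exhibiting a concrete tuned gadget $P_0$ and verifying that \emph{no} laminar family of short chords $3$-hop-connects all of its unit-disk edges --- is exactly the content of the theorem, and it is left entirely open. Scaling the seven-case analysis of Theorem~\ref{thm:8points} from stretch $3$ to stretch $4$ is not a routine matter of adding cases: as you note, a $3$-hop detour can exit and re-enter a pocket, so the local forcing arguments (``this chord forces that chord, which strands some pair'') that close the $8$-point analysis no longer obviously terminate, and you give no candidate for $m$, for $|P_0|$, or for the tuned distances, let alone a verification that the case distinction is exhaustive. The padding step is also shakier than you suggest: points inserted ``arbitrarily close to an existing vertex'' are adjacent to the gadget and enlarge the supply of non-crossing relay edges, so one must re-argue that they cannot shorten any hop distance, which is not automatic.

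The paper avoids a finite gadget altogether and this is why its statement reads ``sufficiently large $n$.'' It takes the regular $n$-gon on a circle just larger than the circumcircle of a unit equilateral triangle, so that roughly $(\frac13-\eps)n$ consecutive vertices have diameter at most $1$. After augmenting a hypothetical plane $3$-hop spanner to a maximal non-crossing subgraph, the arcs of its edges form a laminar family whose maximal elements are the sides of a convex polygon $p_1,\dots,p_k$ with $4\le k\le 6$; each side bounds a triangulated block, split at the apex $q_i$ of the triangle on the chord $p_ip_{i+1}$ into arc-sets $A_i,B_i$. The decisive observation is local but quantitative: if $|B_i|+|A_{i+1}|$ were smaller than $(\frac13-3\eps)n$, one could pick $u$ just before $q_i$ and $v$ just after $q_{i+1}$ with $|uv|\le 1$, yet every path from $u$ to $v$ in the plane subgraph must pass through $p_{i+1}$ and spends at least $2$ hops on each side, so $h(u,v)\ge 4$. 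Summing $|B_i|+|A_{i+1}|\ge(\frac13-3\eps)n$ over the $k\ge 4$ blocks gives $n\ge 1.08\,n$, a contradiction. This averaging argument is precisely the replacement for the exhaustive case analysis your plan would require, and without something playing that role your proposal does not yield the theorem.
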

\begin{proof}
  Consider a set $P$ of $n$ points that form the vertices of regular $n$-gon
  $R$ inscribed in a circle $C$, where the circle is just a bit larger than the
  circumscribed circle of an equilateral triangle of unit edge length.
  Formally, for a given $\eps \in (0,1/50)$, set $n =\lceil 2\eps^{-1} \rceil$ and choose
  the radius of $C$ such that every sequence of $\left(\frac13 -\eps\right) n $
  consecutive points from $P$ makes  a subset of diameter at most $1$;
  and any larger sequence  makes a subset of diameter larger than $1$.
  Note that $\eps n \geq 2$. (We may set $\eps =0.02$, which yields $n=100$.)

  The short circular arc between two consecutive vertices of $R$
  is referred to as an \emph{elementary arc}. (Its center angle is $2\pi/n$.)
  If $A$ is a set of elementary arcs, $X(A)$ denotes its set of endpoints;
  obviously $|X(A)| \geq |A|$, with equality when $A$ covers the entire circle $C$.

  Suppose, for the sake of contradiction, that the unit disk graph $G$ has a plane
  subgraph $G'$ with hop number at most $3$.
  First, augment $G'$ to a maximal noncrossing subgraph of $G$, by successively adding edges
  from $G \setminus G'$ that do not introduce crossings.
  Adding edges does not increase the hop number of $G'$, which remains at most $3$.

We define \emph{maximal} edges in $G'$ as follows.
Associate every edge of $G'$ with the shorter circular arc between its endpoints.
Observe that containment between arcs is a partial order (poset).
An edge of $G'$ is \emph{maximal} if the associated arc is maximal in this poset.
Due to planarity, if two arcs overlap, then one of the arcs contains the other.
Hence the maximal edges correspond to nonoverlapping arcs.
As such, the maximal edges form a convex cycle, \ie, a convex polygon
$Q=p_1,p_2,\ldots,p_k$. Refer to Fig.~\ref{fig:f1}.
By the choice of $C$, we have $k \geq 4$.
Each edge of the polygon $Q$ determines a set of points, called \emph{block},
that lie on the associated circular arc (both endpoints of the edge are included).
Since the length of each edge of $Q$ is at most $1$, the restriction of $G'$
to the vertices in a block is a triangulation.
\begin{figure}[htbp]
\centering
\includegraphics[scale=0.63]{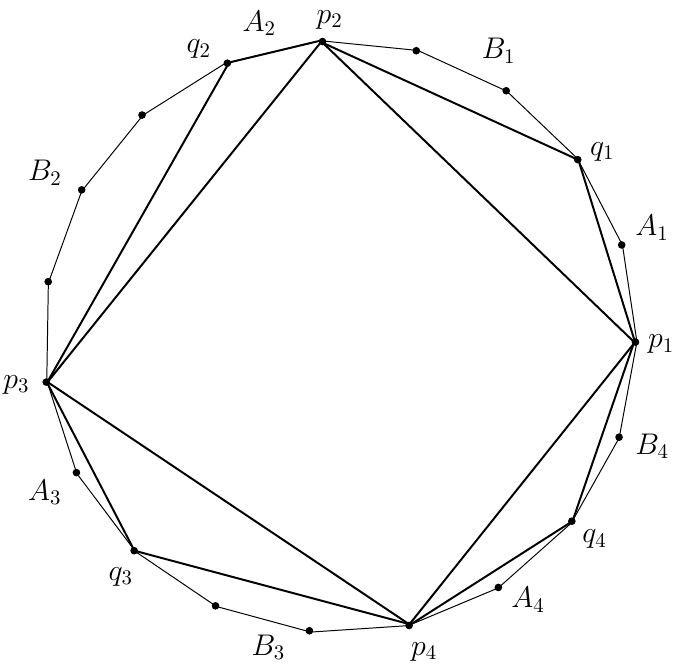}
\caption{The partition induced by the blocks for $n=19$ and $k=4$.
  The edges $p_i p_{i+1}$ are maximal edges of $G'$ and $\Delta{p_i p_{i+1} q_i}$ is the
  unique triangle adjacent to $p_i p_{i+1}$ in the triangulation of the $i$th block.
Since $n=19$ is small, the figure only illustrates the notation used in the proof of
Theorem~\ref{thm:many-points}; $|A_1|=2$, $|B_1|=3$,  $|A_2|=1$, $|B_2|=4$, etc.}
\label{fig:f1}
\end{figure}

Let $A_i \cup B_i$ be the sets of elementary arcs in counterclockwise order
covering the $i$th block such that $A_i$ and $B_i$ are separated by a common vertex $q_i$,
where the triangle $\Delta{p_i p_{i+1} q_i}$ is the (unique) triangle
adjacent to the chord $p_i p_{i+1}$ in the triangulation of the $i$th block
(where addition is modulo $k$, so that $k+1=1$).
In particular, $q_i$ is the last endpoint of an elementary arc in $A_i$ and the first
endpoint of an elementary arc in $B_i$, in counterclockwise order.
As such, we have
\begin{equation} \label{eq:sum}
   \sum_{i=1}^k (|A_i| + |B_i|) =n.
\end{equation}

By definition, we have
\begin{equation} \label{eq:each}
|A_i| + |B_i| \leq \left(\frac13 -\eps\right) n, \ \ \text{ for } i=1,\ldots,k.
\end{equation}

By the maximality of the blocks in $G'$, we have
\begin{equation} \label{eq:pair}
(|A_i| + |B_i|) + (|A_{i+1}| + |B_{i+1}|) \geq
\left(\frac13 -\eps\right) n, \ \ \text{ for } i=1,\ldots,k.
\end{equation}

By the maximality of $G'$, we also have $k \leq 6$, since otherwise
an averaging argument would yield two adjacent blocks, say, $i$ and $i+1$,
that can be merged by adding one chord of length at most $1$ and so
that the merged sequence of points has size at most
\[ |A_i| + |B_i| + |A_{i+1}| + |B_{i+1}|  \leq \frac{2n}{7} < \left(\frac13 -\eps\right)n, \]
which would be a contradiction. We next prove the following inequality:
\begin{equation} \label{eq:neighborhood}
  |B_i| + |A_{i+1}| > \left(\frac13 -3\eps\right) n, \ \ \text{ for } i=1,\ldots,k.
\end{equation}

Suppose for contradiction that $|B_i| + |A_{i+1}| \leq \left(\frac13 -3\eps\right) n$
holds for some $i$. Consider the $\eps n$ elementary arcs preceding the arcs in $B_i$
and the $\eps n$ elementary arcs following the arcs in $A_{i+1}$, in counterclockwise order.
Denote these sets of arcs by $U_i$ and $V_i$, respectively ($|U_i|=|V_i|=\eps n$).
Recall that $\eps n \geq 2$ and thus $|X(U_i)|, |X(V_i)| \geq |U_i| = \eps n \geq 2$.

We claim that there exist $u \in X(U_i)$ and  $v \in X(V_i)$ such that $|uv| \leq 1$ and
$h(u,v) \geq 4$. Indeed, $\diam(X(U_i \cup B_i \cup A_{i+1} \cup V_i)) \leq 1$
since $X(U_i \cup B_i \cup A_{i+1} \cup V_i)$ contains at most
\[  \left(\frac13 -3\eps\right) n + 2\eps n \leq \left(\frac13 -\eps\right) n \]
consecutive points.
This proves the first part of the claim for any $u \in X(U_i)$ and  $v \in X(V_i)$.
For the second part,
we can take $u$ as one of the two vertices preceding $q_i$ that is not $p_i$,
and similarly we can take $v$ as one of the two vertices following $q_{i+1}$ that is not $p_{i+2}$.
With this choice, we have $h(u,p_{i+1}) \geq 2$ and $h(p_{i+1},v) \geq 2$, and
$\rho(u,v)$ passes through $p_{i+1}$. Consequently,
\[h(u,v) \geq h(u,p_{i+1}) + h(p_{i+1},v) \geq 2+2=4. \]
We have reached a contradiction, which proves~\eqref{eq:neighborhood}.
The summation of~\eqref{eq:neighborhood} over all $i=1,\ldots ,k$,
in combination with~\eqref{eq:sum} and the inequality $k \geq 4$ yields
\[
n = \sum_{i=1}^k \, (|A_i| + |B_i|)
=\sum_{i=1}^k  \, (|B_i| + |A_{i+1}|)
\geq k \left(\frac13 -3\eps\right) n
\geq 0.27 \, k n \geq 1.08 \,n.
\]
This last contradiction completes the proof of the theorem.
\end{proof}

\paragraph{An upper bound for points on a circle.}
For many problems dealing with finite point configurations in the plane, points in convex position
or on a circle may allow for tighter bounds; see, \eg,
\cite{du1985steiner,du1987steiner,mulzer2004minimum,sattari2019improved}.
We show that the lower bound of $4$ for points on a circle is tight in this case.

\begin{theorem}
\label{thm:circle}
  For every finite point set $S$ on a circle $C$, there exists a plane $4$-hop spanner.
\end{theorem}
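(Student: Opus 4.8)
The plan is to prove that every point set $S$ on a circle $C$ admits a plane $4$-hop spanner by an explicit inductive construction that recursively partitions the arc of points, using a single ``spine'' edge at each level to connect the two halves. First I would note that since $S$ lies on a circle, the UDG edges are exactly the chords of length at most $1$, and two chords cross if and only if their endpoints alternate around $C$; planarity will therefore be controlled by working with \emph{consecutive} blocks of points along the circle. I would set up a divide-and-conquer: choose a longest UDG edge $p_ap_b$ (which, being longest, is maximal in the arc-containment poset and hence will not be crossed by edges chosen inside the two arcs it separates), add it to the spanner, and recurse on the two arcs of points it determines, while also adding edges to tie each recursively built component back to the endpoints $p_a,p_b$ so that any UDG edge straddling the cut can be routed in few hops.

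The key steps, in order, are as follows. \emph{Step 1:} Formalize the ``straddling edge'' obstruction. For any UDG edge $uv$ with $u$ on one side of the chord $p_ap_b$ and $v$ on the other, I must guarantee a path of length at most $4$ in the spanner. The natural route is $u \rightsquigarrow \{p_a \text{ or } p_b\} \to \{p_a \text{ or } p_b\} \rightsquigarrow v$, so I would design the construction so that from every vertex there is a short (one- or two-hop) path to at least one endpoint of the chord used to separate it from its partner. \emph{Step 2:} Make the recursion concrete. A clean choice is to always split the current block of $m$ consecutive points roughly in half by a chord joining two points, add a star (or a short fan) from the split chord's endpoints to the points in each half whose distance is at most $1$, and recurse; because $S$ is on a circle, when $uv$ is a UDG edge the separating chord at the level where $u,v$ first land in different sub-blocks has length at most $|uv|\le 1$ times a bounded factor, and one must verify it is itself a legal UDG edge (length $\le 1$). \emph{Step 3:} Bound the hop distance. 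I would prove by induction on the recursion depth that for any UDG edge $uv$, letting $\sigma$ be the chord separating $u$ and $v$, there is a path $u \to x \to y \to v$ with $x,y$ endpoints of $\sigma$, giving hop distance $\le 4$ once the two endpoint-connections each cost at most $2$ hops; the maximality/longest-edge choice is what keeps these connection costs bounded. \emph{Step 4:} Verify planarity: all edges added when processing a block lie within that block's arc, and the separating chord is nonoverlapping with (indeed contains) every arc processed deeper, so by the alternation criterion no two chosen edges cross.

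The main obstacle I expect is Step 2--Step 3 together: ensuring simultaneously that (a) the separating chord at each level is genuinely short enough to be a UDG edge ($\le 1$), and (b) every point can reach an endpoint of its separating chord within $2$ hops, so that the straddling-edge route costs exactly $4$ and not more. On a circle these two requirements pull in opposite directions — a short separating chord may be far (in graph hops) from the extreme points of the block, while a chord reaching the extremes may be too long. The resolution should exploit the geometric fact used already in Theorem~\ref{thm:many-points}, namely that a block of points of diameter $\le 1$ spans at most roughly a third of the circle, so that \emph{all} pairwise chords within such a block have bounded length; this lets me connect within a block generously (e.g.\ via a spanning fan) while keeping every edge a legal UDG edge, collapsing the intra-block hop cost to at most $2$ and nailing the overall bound of $4$. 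I would therefore anchor the construction on the maximal-edge convex polygon $P=p_1,\ldots,p_k$ from the lower-bound proof, triangulate each block as there, and show a fan plus the polygon edges realizes every straddling UDG edge in at most $4$ hops without crossings.
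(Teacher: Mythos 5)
There is a genuine gap. Your construction never establishes the one fact that makes the bound $4$ (rather than $5$ or worse) come out: a quantitative limit on how many ``blocks'' a single UDG edge can straddle, together with a guarantee that each endpoint of a straddling edge is \emph{one} hop (not two) from the relevant separator vertex. As written, your Step~3 routes a straddling edge $uv$ as $u \rightsquigarrow x \to y \rightsquigarrow v$ with each endpoint-connection costing up to $2$ hops; that is $2+1+2=5$ hops, not $4$. The paper closes this by (a) building the chain $P=p_1,\ldots,p_k$ \emph{greedily}: $p_{i+1}$ is the farthest point counterclockwise such that every intermediate point is within distance $1$ of $p_i$; and (b) centering each block's star at the block's first chain vertex, so every point of block $i$ is exactly one hop from $p_i$. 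The greedy choice is then used to prove the key claim: if a chain edge $p_ip_{i+1}$ lies strictly inside the arc of a UDG edge $uv$, then (after normalizing $uv$ to be a horizontal chord below the center) $|p_ip_{i+1}|<|p_iv|<|uv|\le 1$, so greedy would have reached past $p_{i+1}$ --- hence at most one chain edge (the closing edge $p_kp_1$) can be skipped, the endpoints of $uv$ lie in blocks separated by at most one block, and the route $u,p_{i-1},p_i,p_{i+1},v$ has exactly $4$ hops. Your proposed splitting rules (longest edge, or halving the block) do not have this ``no chain edge hides under a unit chord'' property, so a straddling edge could span many separators and the induction on recursion depth would not terminate at $4$.

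Two further points. First, your recursive scheme has the unresolved difficulty you yourself flag: the separator chord at the level where $u$ and $v$ first part company is only bounded by ``$|uv|$ times a bounded factor,'' which may exceed $1$, so it need not be a UDG edge at all; the paper avoids this entirely because every chain edge is a UDG edge by construction. Second, the geometric fact you import from Theorem~\ref{thm:many-points} (a diameter-$1$ block spans about a third of the circle) is specific to the radius chosen in that lower-bound construction and does not hold for an arbitrary circle $C$; what you actually need --- and what does hold in general for $r>1/2$ --- is only that each block lies on a sub-semicircular arc with chord $\le 1$, hence induces a clique. Your final paragraph, where you abandon the recursion and anchor on a single-level block decomposition with fans, is essentially the paper's architecture; but without the greedy maximality of the chain and the one-hop star centering, the proof does not yield $4$.
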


\begin{proof}
Let $C$ be a circle with center $o \in \RR^2$ and radius $r>0$. Let $S$
be a set of $n$ points on $C$, and let $G=G(S)$ be the corresponding UDG.
We may assume w.l.o.g.\ that $G$ is connected.
If $r \leq 1/2$, then $G=K_n$, we set $G'=K_{1,n-1}$, \ie, a star centered at
an arbitrary point. This yields $h(s,s') \leq 2$ for every $s,s' \in S$.
We therefore subsequently assume that $r> 1/2$; this implies that no edge of
$G$ passes through $o$.
	
Let $\gamma \subset C$ be a shortest arc of $C$ covering the points in
$S=\{s_1,s_2,\ldots,s_n\}$, where the points are labeled counterclockwise on $\gamma$.
We claim that $|s_i s_{i+1}| \leq 1$, for $i=1,\ldots,n-1$.
Indeed, let $1 \leq i \leq n-1$ be the smallest index such that $|s_i s_{i+1}| >1$.
Then $|s_1 s_n| \geq |s_i s_{i+1}| >1$ and therefore
$\{s_1,\ldots,s_i\}$ and $\{s_{i+1},\ldots,s_n\}$ are disconnected in $G$,
a contradiction. We construct a plane subgraph $G'=(S,E')$ of $G$ in two phases,
and then show the $G'$ is a 4-hop spanner for $S$.

In the first phase, we incrementally construct a polygonal chain $Q=p_1,p_2,\ldots,p_k$,
on a subset of $k$ elements of $S$ with the vertices chosen counterclockwise
by a greedy algorithm starting with $p_1=s_1$ ($k$ is determined by the algorithm).
The polygon $Q$ will be part of the plane graph $G'$; the following properties will be satisfied.
\begin{itemize} \itemsep 1pt
\item $p_i \in S$, for $i=1,\ldots,k$,
\item $|p_i p_{i+1}| \leq 1$, $i=1,\ldots,k-1$.
\end{itemize}
In the current step, assume that $p_i$ has already been selected;
here $p_i$ precedes $s_n$. The algorithm checks subsequent points
counterclockwise on $C$, say $s_j,s_{j+1},\ldots$
As noted above, since $G$ is connected, we have $|p_i s_j| \leq 1$.
The algorithm selects $p_{i+1}=s_{j+h}$, where $h \geq 0$ is the largest index
such that $|p_i s_{j+h'}| \leq 1$ for $h'=0,1,\ldots,h$, \ie, for \emph{all}
successive points until $s_{j+h}$; or $p_{i+1}=s_n$, if the last point is reached.
If $p_{i+1}$ precedes $s_n$, the algorithm updates $i \gets i+1$ and
continues with the next iteration; if $p_{i+1}=s_n$, we set  $k:=i$.
When this process terminates, $k$ is set.

\begin{figure}[htbp]
\centering
	\includegraphics[scale=0.5]{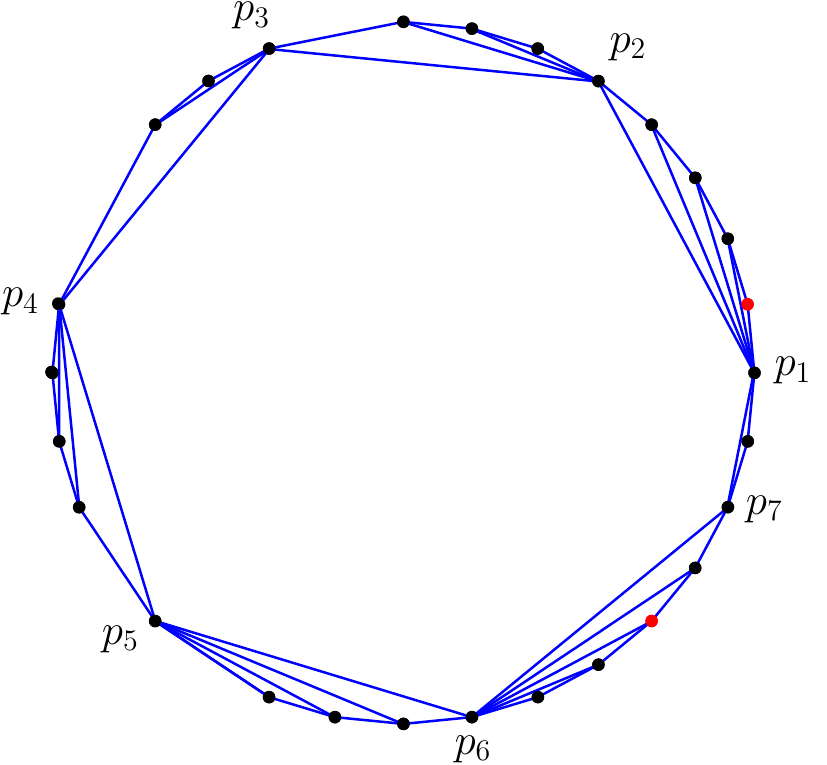}
	\caption{An example of the $4$-hop spanner constructed by the greedy algorithm;
	$P=p_1,\ldots,p_7$ is a closed chain.}
	\label{fig:test}
\end{figure}

If $|p_k p_1| \leq 1$, the edge $p_k p_1$ is added to close the chain,
\ie, $Q$ is a convex polygon whose $k$ edges belong to $E$, in particular, $p_k p_1 \in E$;
note that there may be points of $S$ on the arc $\arc{p_k p_1}$.
It is possible that $|p_k p_1|>1$, in which case $Q=p_1,\ldots,p_k$ is an open chain
with $k-1$ edges. In this case there are no other points of $S$ on the arc $\arc{p_k p_1}$.
Each edge of the chain $Q$ determines a set of points called \emph{block}
(endpoints of the edge are included). Depending on whether the chain $Q$
is open or closed, there are either $k-1$ blocks or $k$ blocks.
		
In the second phase, for every edge $p_i p_{i+1} \in E$ (with wrap around),
we connect $p_i$ with all other points (if any) in that block
(\ie, create a \emph{star} whose apex is $p_i$); refer to Fig.~\ref{fig:test}
for an example. This completes the construction of the plane graph $G'=(S,E')$.
	
It remains to analyze its hop factor of $G'$. Let $uv \in E$ be any edge of $G$;
we may assume w.l.o.g.\ that $uv$ is horizontal and lies below the center $o$.
Refer to Fig.~\ref{fig:f8}\,(right).
We show that $uv$ can have at most one edge of $Q$ strictly below it. 
Suppose that $e= p_i p_{i+1} \in E'$ is an edge of the polygon $Q$ that lies
strictly below $uv$. We claim that $i=k$, \ie, $e= p_k p_1$ and so this edge is unique
if this occurs. Note that if $e= p_k p_1 \in E$, then the chain $Q$ is closed.

\begin{figure}[htbp]
	\centering
	\includegraphics[scale=0.7]{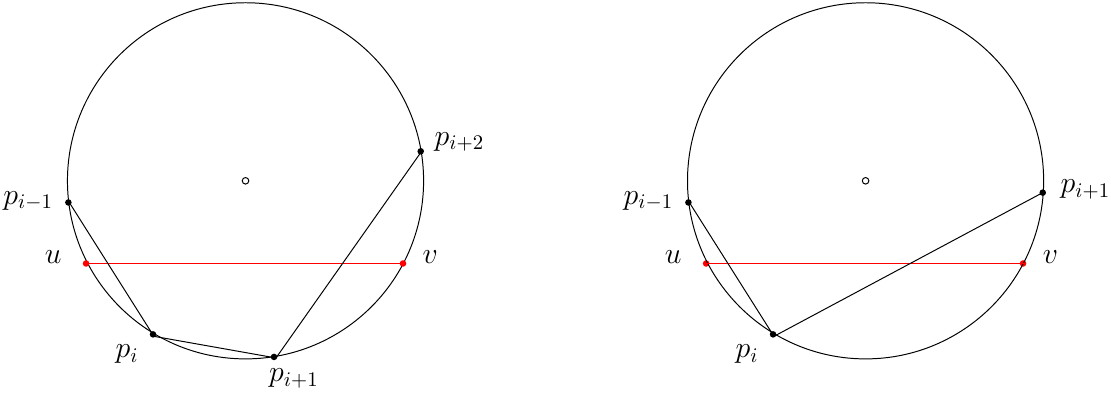}
	\caption{Left: the path connecting $u$ and $v$ is $u p_{i-1} p_i p_{i+1} v$.
		Right: the path connecting $u$ and $v$ is $u p_{i-1} p_i v$.}
	\label{fig:f8}
\end{figure}

Assume that $i \neq k$. Since $uv$ is below the horizontal diameter of $C$,
we have $|p_i p_{i+1}| < |p_i v| <|uv| \leq 1$,
and thus the greedy algorithm would have chosen $v$ or another vertex beyond $v$
counterclockwise, instead of $p_{i+1}$ as the other endpoint of the edge incident to $p_i$,
a contradiction. This proves the claim.
	
By the claim, the endpoints of every edge $uv \in E$ lie either in the same block,
in two adjacent blocks, or in two blocks that are separated by exactly one other block.
Consequently, $uv$ can be connected by a $h$-hop path, for some $h \leq 4$.
Fig.~\ref{fig:f8}\,(left) shows the case when the endpoints $u,v$ belong to two blocks
that are separated by exactly one other block: the connecting path is $u p_{i-1} p_i p_{i+1} v$.
Fig.~\ref{fig:f8}\,(right) shows the case when the endpoints $u,v$ belong to two adjacent blocks:
the connecting path is $u p_{i-1} p_i v$.
When both $u$ and $v$ belong to the same block of the chain, they are connected either directly
or by a path of length $2$ via the center of the corresponding star.
\end{proof}

\section{The maximum degree of hop spanners cannot be bounded} \label{sec:constant}

It is not difficult to see that dense (abstract) graphs do not admit
bounded degree hop spanners (irrespective of planarity). We start with an observation
regarding the complete UDG $K_n$ and then extend it and show that the maximum degree
of hop spanners of sparse UDGs is also unbounded.

We use the fact that graphs of small diameter \emph{and} maximum degree must be small.
Indeed, a connected graph with diameter at most $D$ and maximum degree is at most~$\Delta \geq 3$
has fewer than $\frac{\Delta}{\Delta-2} \cdot (\Delta-1)^D$ vertices~\cite[Proposition~1.3.3]{Diestel};
and a connected graph with diameter at most $D$ and maximum degree is at most~$2$ has fewer than $2D+2$
vertices. As such,  a connected graph with diameter at most $D$ and maximum degree 
at most~$\Delta \geq 2$ has fewer than $2 \Delta^D$ vertices.

\begin{theorem} \label{thm:K_n}
  For every pair of integers $k \geq 2$ and $\Delta \geq 2$, there exists a set $S$ of
  $n \leq 2\Delta^k$ points such that the unit disk graph
  $G=(S,E)$ on $S$ has no $k$-hop spanner whose maximum degree is at most~$\Delta$.
\end{theorem}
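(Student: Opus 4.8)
The plan is to realize the complete graph as a unit disk graph and then invoke a Moore-type counting bound. First I would cluster all $n$ points tightly: place them in general position inside a disk of diameter strictly less than $1$ (for instance, inside a disk of radius $1/4$). Then every pair of points is at distance less than $1$, so the unit disk graph $G$ on $S$ is exactly the complete graph $K_n$, in which every pair of vertices forms an edge.

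The key structural observation is that any $k$-spanner of $K_n$ must have small hop diameter. By Lemma~\ref{lem:equiv}, a subgraph $G'$ is a $k$-spanner of $G=K_n$ if and only if $h(u,v)\le k$ for every edge $uv$ of $K_n$. Since all $\binom{n}{2}$ pairs are edges, this is equivalent to saying that every vertex of $G'$ reaches every other vertex within $k$ hops; that is, $G'$ has hop diameter at most $k$. Thus it suffices to show that no graph on $n$ vertices with maximum degree at most $\Delta$ can have hop diameter at most $k$ once $n$ is large enough.

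The heart of the matter is the Moore bound. Fix a vertex $v$ and perform a breadth-first search in $G'$; since every vertex has at most $\Delta$ neighbors, the number of vertices within hop-distance $k$ of $v$ is at most
\[ M(\Delta,k) := 1 + \Delta + \Delta(\Delta-1) + \cdots + \Delta(\Delta-1)^{k-1}, \]
the count of nodes in the rooted tree that branches $\Delta$ times at the root and $\Delta-1$ times at every other node, down to depth $k$. A direct estimate shows $M(\Delta,k)=O(\Delta^k)$ uniformly in $\Delta,k\ge 2$. Hence if $G'$ has maximum degree at most $\Delta$ and hop diameter at most $k$, then taking $v$ arbitrary forces $n\le M(\Delta,k)$. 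Setting $n = M(\Delta,k)+1 = O(\Delta^k)$ therefore makes a $\Delta$-bounded-degree $k$-spanner of $K_n$ impossible, which is precisely the claim.

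I expect no serious obstacle: the argument is a clean double counting via breadth-first search. The only places needing minor care are verifying the bound $M(\Delta,k)=O(\Delta^k)$ in the boundary case $\Delta=2$ (where $M(2,k)=2k+1$, still $O(2^k)$ for $k\ge 2$) and confirming that the tightly packed point set genuinely realizes $K_n$, which is immediate once the cluster diameter drops below~$1$.
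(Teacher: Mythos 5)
Your proposal is correct and is essentially the paper's own argument: the paper also realizes $K_n$ by clustering points in a disk of unit diameter and then runs the same BFS/Moore-bound layering ($|S_i|\le \Delta(\Delta-1)^{i-1}$, so at most $O(\Delta^k)$ vertices are reachable within $k$ hops), concluding that some edge $pq$ has $h(p,q)\ge k+1$. Your explicit check of the $\Delta=2$ boundary case is a small refinement, since the paper's closed-form sum $1+\Delta\frac{(\Delta-1)^k-1}{\Delta-2}$ is not defined there, but the substance is identical.
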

\begin{proof}
Let $S$ be a set of $n$ points in a unit disk. Then the UDG $G$ of $S$ is the complete graph $K_n$.
Suppose, to the contrary, that $G'=(S, E')$ is a $k$-hop
spanner for $G$ with maximum degree at most $\Delta$.
Then $h(p,q)\leq k$ for all $p,q\in S$, hence the diameter of $G'$ is at most $k$.
By the above observation we have $n < 2 \Delta^k$, thus we obtain a contradiction if we set
$n = 2 \Delta^k$.
\end{proof}

\begin{theorem} \label{thm:dense}
Let $t \colon \NN \to \NN$, $t(n) \leq n$, be an integer function that tends  to $\infty$ with $n$.
For every pair of integers $k \geq 2$ and $\Delta \geq 2$, there exists $n_0 \in \NN$ such that
for every $n \geq n_0$, there is a set $S$ of $n$ points in the plane such that
\begin{enumerate} \itemsep 0pt
\item [{\rm (i)}] the unit disk graph $G=(S,E)$ on $S$
  has $\Theta(n \cdot t(n))$ edges, and
\item [{\rm (ii)}] $G$ has no $k$-spanner whose maximum degree is at most $\Delta$.
\end{enumerate}
\end{theorem}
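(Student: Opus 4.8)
The plan is to combine the complete-graph gadget from Theorem~\ref{thm:K_n} with many disjoint copies, arranged so that the overall unit disk graph has the prescribed edge density $\Theta(n\cdot t(n))$ while still forcing a high-degree vertex in any candidate $k$-spanner. The key observation is that the counting argument underlying Theorem~\ref{thm:K_n} requires only $\Omega(\Delta^k)$ mutually-adjacent points; so if I can embed a clique of size $m=\lceil c\Delta^k\rceil$ (for the appropriate constant $c>2$ from the alternative encoding argument) inside a larger point set, then no subgraph of maximum degree $\le\Delta$ can be a $k$-spanner, because some edge of that clique would have no valid encoding. The challenge is to realize both the target edge-count $\Theta(n\cdot t(n))$ and such a clique simultaneously in a genuine unit disk graph.

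First I would handle the edge-density requirement. Tile a long horizontal strip with $\Theta(n/t(n))$ consecutive clusters $C_1,\dots,C_\ell$, placing $t(n)$ points in each cluster inside a disk of diameter less than~$1$, and spacing consecutive cluster-centers so that points in $C_j$ and $C_{j+1}$ are within unit distance but points in $C_j$ and $C_{j+2}$ are farther than~$1$ apart. Within each cluster all $\binom{t(n)}{2}$ pairs are UDG-edges, and between consecutive clusters all $t(n)^2$ pairs are edges, while non-adjacent clusters contribute nothing. Summing gives $\Theta(\ell\cdot t(n)^2)=\Theta\bigl((n/t(n))\cdot t(n)^2\bigr)=\Theta(n\cdot t(n))$ edges, which is part~(i). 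The constraint $t(n)\le(n-1)/2$ guarantees $\ell\ge 2$, so this tiling is nondegenerate, and choosing $n_0$ large enough that $t(n)\ge c\Delta^k$ is possible since $t(n)\to\infty$.

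For part~(ii), I would designate one cluster, say $C_1$, to contain at least $m=\lceil c\Delta^k\rceil$ points by reserving that many of the $t(n)$ slots in it (valid once $t(n)\ge m$). The points of $C_1$ form a clique $K_m$ in the UDG. Suppose for contradiction that $G$ had a $k$-spanner $G'$ of maximum degree at most~$\Delta$. Restricting attention to the $\binom{m}{2}$ edges inside $C_1$, each such edge $uv$ must be joined in $G'$ by a path of at most $k$ edges; applying the encoding argument verbatim (the paths may leave $C_1$, but the encoding only uses the start vertex and the at-most-$\Delta$ branching at each step, so it is insensitive to where the path travels), there are at most $(n-1)\Delta^k$ distinct encodings, whereas the number of edges needing encoding is $\binom{m}{2}$. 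Choosing $c$ so that $\binom{m}{2}>(n-1)\Delta^k$ fails here—because $n$ can be much larger than $m$—so I instead localize the counting to $C_1$: every connecting path between two vertices of $C_1$ has its first edge leaving one of the $m$ vertices, giving at most $m\,\Delta^k$ encodings, which must be at least $\binom{m}{2}$, forcing $m\le 2\Delta^k+1$, contradicting $m=\lceil c\Delta^k\rceil$ for $c>2$.

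The main obstacle is precisely this last localization: the global encoding bound $(n-1)\Delta^k$ is too weak when $n\gg\Delta^k$, so I must argue that the paths certifying the clique edges can be charged only to the $m$ starting vertices in $C_1$, yielding the sharp bound $m\,\Delta^k\ge\binom{m}{2}$. This works because each such path is uniquely recovered from its start vertex in $C_1$ together with its edge-branching string, independently of the ambient graph's size. I would take care to state this restriction cleanly, and to verify the geometric realizability of the clusters (disjointness, the diameter bounds, and the adjacency pattern between consecutive versus distant clusters) so that the constructed graph is a bona fide unit disk graph.
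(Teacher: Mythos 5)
Your proof is correct, and the construction is essentially the paper's: both pack the $n$ points into $\Theta(n/t(n))$ groups of $\Theta(t(n))$ points, each group inside a disk of diameter less than $1$, strung along a line so that the resulting unit disk graph has $\Theta(n\cdot t(n))$ edges. (The paper uses groups of size $2t(n)+1$ with exactly \emph{one} inter-group edge between consecutive groups, while you make consecutive clusters fully adjacent; both are realizable and both give the right edge count.) Where you genuinely diverge is the pigeonhole step for part (ii). The paper applies the encoding argument \emph{globally}: it shows $|E|>(n-1)\Delta^k$, which forces a two-case computation ($t$ large versus $t$ small) to bound $|E|$ from below and above against the global encoding count. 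You instead localize the count to a clique of $m=\lceil c\Delta^k\rceil$ points inside one cluster, charging every certifying path for a clique edge to its starting vertex in the clique; this gives $\binom{m}{2}\le m\,\Delta^k$, hence $m\le 2\Delta^k+1$, a contradiction once $c$ is a bit larger than $2$ (say $c=3$). Your localization --- equivalently, the ball-growing argument from the first proof of Theorem~\ref{thm:K_n}, applied to the whole spanner $G'$ around a vertex of that cluster --- completely decouples part (ii) from the edge-count computation, so the paper's two-case analysis disappears; the only extra care needed is the point you already flag, namely that certifying paths may leave the cluster but the encoding (start vertex plus branching string) is insensitive to where the path travels. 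Both routes are valid; yours is somewhat more modular, while the paper's has the feature that the contradiction is witnessed by the edge set as a whole rather than by a designated clique.
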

\begin{proof}
For a given $t$, partition $n$ points into $\left \lfloor \frac{n}{t} \right \rfloor$
groups of size $t$ and a remaining group (if any) of size
$n - \left \lfloor \frac{n}{t} \right \rfloor t$.
Place the groups in disjoint disks of unit diameter in the plane,
so that the UDG of each group is a complete graph;
and arrange the disks along a line such that
the UDG $G$ has exactly one edge between any two consecutive groups.
Each group of size $t$ induces $\binom{t}{2}=\Theta(t^2)$ edges,
hence $G$ has $\Theta(\frac{n}{t}\cdot t^2+t)=\Theta(nt)$ edges.

Suppose that $G$ has a $k$-hop spanner $G'$ with maximum degree at most $\Delta$.
Then $h(p,q)\leq k$ for all $p,q\in S$ within the same group, hence each group
induces a subgraph of $G'$ of diameter at most $k$.
By the above observation we have $t < 2 \Delta^k$, thus we obtain a contradiction if
we choose $n_0$ such that $t(n) \geq  2 \Delta^k$ for all $n\geq n_0$.
\end{proof}

\section{Conclusions}  \label{sec:conclusion}

We have shown that the UDG of every set of $n$ points in the plane
admits a 5-hop spanner with at most $5.5n$ edges, a 3-hop spanner with
at most $11n$ edges, and a 2-hop spanner with $O(n\log n)$ edges.
The third bound leaves an interesting question:
Are there $n$-element point sets for which every 2-hop spanner has
$\omega(n)$ edges? Recent results show that unit disks may exhibit
surprising behavior~\cite{mcdiarmid2014number,pach2016unsplittable}.

Finding nontrivial lower bounds for the size of $k$-hop spanners
remains an open problem. We mention a few straightforward lower
bounds. Observe that if the girth of an UDG $G$ is $k\geq 4$, then the
only $(k-2)$-hop spanner of $G$ is $G$ itself. In particular, for $n$
points in a section of the square lattice $\ZZ^2$, the UDG has
$(2-o(1))n$ edges, its girth is 4, and so the only 2-hop spanner of $G$ is
$G$ itself. For $n$ points in a section of a hexagonal lattice, the UDG
has $(\frac32-o(1))n$ edges, its girth is 6, and so the only 3- or 4-hop
spanner of $G$ is $G$ itself. Finally, for $n$ points in $\ZZ^2\setminus 2\ZZ^2$,
the UDG has $(\frac43-o(1))n$ edges, its girth is 8, and so the only 5- or
6-hop spanner of $G$ is $G$ itself.

Biniaz~\cite{biniaz2020plane} showed that the UDG of every point set
admits a hop spanner with hop stretch factor at most 341.
For points on a circle, we have improved the upper bound to 4, and
showed that this bound is the best possible. This is the first
nontrivial lower bound for the hop stretch factor of any plane hop
spanner (Theorem~\ref{thm:circle}). Are there point sets for which
every plane hop-spanner has hop stretch factor at least $5$?

In this paper, we considered the UDG of a point set in terms of
Euclidean distance (\ie, $L_2$-norm) in the plane.
We can define UDG over any other norm over $\RR^2$, where the unit
disks are translates of a centrally symmetric convex body.
Estimating the size of hop spanners over arbitrary normed spaces in $\RR^2$
is another problem for consideration.

\bibliographystyle{abbrv}
\bibliography{a}

\end{document}